\numberwithin{equation}{section} 
\theoremstyle{plain}
  \newtheorem{prop}{Proposition}
  \newtheorem{cor}{Corollary}
\theoremstyle{definition}
  \newtheorem{remark}{Remark}
  \numberwithin{prop}{section}
   \numberwithin{cor}{section}
   \numberwithin{remark}{section}
\DeclareMathOperator*{\pf}{Pf} 
\title{\Large\bfseries Finite size corrections in the bulk for circular $\beta$ ensembles}%
\author{Peter J. Forrester and Bo-Jian Shen}
\date{}
\begin{document}

\maketitle

School of Mathematics and Statistics,  The University of Melbourne,
Victoria 3010, Australia. \: \: Email: {\tt pjforr@unimelb.edu.au}; {\tt bojian.shen@unimelb.edu.au}\\

\bigskip

\begin{abstract}
\noindent
The circular $\beta$ ensemble for $\beta =1,2$ and 4 corresponds to circular orthogonal, unitary and symplectic ensemble respectively as introduced by Dyson. The statistical state of the eigenvalues is then a determinantal point process ($\beta = 2$) and Pfaffian point process ($\beta = 1,4$). The explicit functional forms of the correlation kernels then imply that the general $n$-point correlation functions exhibit an asymptotic expansion in $1/N^2$, which moreover can be lifted to an asymptotic in $1/N^2$ for the spacing distributions and their generating function. 
We use $\sigma$-Painlev\'e characterisations to show that the functional form of the first correction is related to the leading term via a second derivative. 
In the case $\beta = 2$ this finding has immediate consequence in interpreting the empirical Riemann zeros spacing distribution at large height, and that of their thinning.
Explicit functional forms are used to show that the spectral form factors for $\beta =1,2$ and 4 also admit an asymptotic expansion in $1/N^2$. Differential relations are identified expressing the first and second correction in terms of the limiting functional form, and evidence is presented that they hold for general $\beta$. For even $\beta$ it is proved that the two-point correlation function permits an asymptotic expansion in $1/N^2$, and moreover that the 
leading correction relates to the limiting functional form via a second derivative.

\end{abstract}

\vspace{3em}

\section{Introduction}

\subsection{Bulk scaling for the CUE}
Dyson's circular unitary ensemble (CUE) \cite{Dy62} is identical to the set of $N \times N$ matrices from the classical unitary group chosen with Haar measure. From the latter viewpoint, which has its origins in the work of Hurwitz (see the review \cite{DF17}), it is a classical result of Weyl \cite{We39} that the corresponding eigenvalue probability density function (PDF) is given by
\begin{equation}\label{1.1}
p_{2,N}(\theta_1,\dots,\theta_N) := {1 \over (2 \pi)^N N!}
\prod_{1 \le j < k \le N} | e^{i \theta_k} - e^{i \theta_j}|^2,
\quad 0 \le \theta_l < 2 \pi \: \: (l=1,\dots,N).
\end{equation}

From this starting point, Dyson \cite{Dy62a} deduced that the general $n$-point correlation function
\begin{equation}\label{1.2}
\rho_{(n)}^{\rm CUE}(\theta_1,\dots,\theta_k) :=
N (N-1) \cdots (N-n+1) \int_0^{2 \pi} d \theta_{n+1} \cdots \int_0^{2 \pi} d \theta_N \, p_{2,N}(\theta_1,\dots,\theta_N)
\end{equation}
has the determinantal form
\begin{equation}\label{1.3}
\rho_{(k)}^{\rm CUE}(\theta_1,\dots,\theta_n)  =
\det [ K_N^{\rm CUE} (\theta_j, \theta_k) ]_{j,k=1,\dots,n},
\end{equation}
where
\begin{equation}\label{1.4}
K_N^{\rm CUE} (\theta, \phi) = {1 \over 2 \pi} {\sin (N(\theta - \phi)/2) \over \sin((\theta - \phi)/2)}.
\end{equation}
Dyson also considered the bulk scaled $N \to \infty$ limit, in which the angles are scaled $\theta_l = 2 \pi X_l/N$ so that in the variables $\{X_l\}$ the mean eigenvalue density is unity, with the result
\begin{equation}\label{2.1}
\rho_{(n)}^{\rm bulk}(X_1,\dots,X_n) :=
\lim_{N \to \infty} \Big ( {2 \pi \over N} \Big )^n
\rho_{(n)}^{\rm CUE}\bigg ( {2 \pi X_1 \over N},\dots,
 {2 \pi X_n \over N} \bigg ) = \det [ K_\infty (X_j, X_k) ]_{j,k=1,\dots,n},
\end{equation} 
where
\begin{equation}\label{2.1a}
K_\infty(X,Y) := \lim_{N \to \infty} {2 \pi \over N}
K_N^{\rm CUE}  \bigg ( {2 \pi X \over N}, {2 \pi Y \over N} \bigg ) =
{\sin \pi (X - Y) \over \pi (X - Y) }.
\end{equation}

Our interest in the present paper relates to the structure of the large $N$ expansion of the bulk scaled $n$-point correlation, which is the function of $\{X_l\}$ and $N$ in the second expression in (\ref{2.1}). Beyond the CUE, we take up this question for the class of circular $\beta$-ensembles specified by the 
eigenvalue  PDF
\begin{equation}\label{1.1A}
p_{\beta,N}(\theta_1,\dots,\theta_N) := {1 \over (2 \pi)^N }
{1 \over C_{\beta,N}}
\prod_{1 \le j < k \le N} | e^{i \theta_k} - e^{i \theta_j}|^\beta,
\quad 0 \le \theta_l < 2 \pi \: \: (l=1,\dots,N),
\end{equation}
where
\begin{equation}\label{1.1B}
C_{\beta,N} = {\Gamma(\beta N + 1) \over (\Gamma(\beta + 1) )^N}.
\end{equation}
Note that the CUE is the case $\beta = 2$ of (\ref{1.1A}). For a construction of Hessenberg unitary matrices which realise (\ref{1.1A}) for general $\beta > 0$, and moreover show that the bulk scaled limit is well defined, see \cite{KN04}.

In the case of the CUE, due to (\ref{1.3}), the question of the structure of the large $N$ expansion of the bulk scaled $k$-point correlation function reduces to the simple question of expanding for large $N$
\begin{equation}\label{3.1}
{1 \over N} {\sin \pi (X - Y) \over \sin ( \pi (X - Y)/N)}.
\end{equation}
A significant feature is that in this functional form $N$ need not be restricted to positive integer values, but rather may be regarded as a continuous parameter. Adopting this viewpoint, 
one sees that it is an even function of $N$ (as well as in $(X-Y)$), possessing an expansion in powers of $1/N^2$
\begin{equation}\label{3.2}
{\sin \pi (X - Y) \over \pi (X - Y) } + {1 \over N^2} 
{\pi \over 6} (X - Y) \sin \pi (X - Y) + {\rm O} \bigg (
{1 \over N^4} \bigg ).
\end{equation}
In particular, for the bulk scaled two-point correlation function, this implies that for large $N$
\begin{equation}\label{3.3}
\Big ( {2 \pi \over N} \Big )^2
\rho_{(2)}^{\rm CUE}\bigg ( {2 \pi X \over N},
 {2 \pi Y \over N} \bigg ) =
 1 - \bigg ( {\sin \pi (X - Y) \over \pi (X - Y) } \bigg )^2 -
 {1 \over 3 N^2}  \sin^2 \pi (X - Y) + {\rm O}
 \bigg (
{1 \over N^4} \bigg ),
\end{equation}
a result which has application in interpreting data for the two-point correlation function of the Riemann zeros at large height
\cite{BBLM06,FM15,BFM17}. More on the consequences of our results in relation to structure seen in particular empirical distributions for
the Riemann zeros at large height is to come latter in this article.
At present the main point to be emphasised is that the asymptotic expansion of the general $k$-point correlation function, to all orders in $1/N$, contains only powers of 
$1/N^2$,
\begin{equation}\label{4.1}
\Big ( {2 \pi \over N} \Big )^n
\rho_{(n)}^{\rm CUE}\bigg ( {2 \pi X_1 \over N},\dots,
 {2 \pi X_n \over N} \bigg ) \sim
 \rho_{(n),0,\beta = 2}^{\rm bulk}(X_1,\dots,X_n) +
 \sum_{l=1}^\infty {1 \over N^{2l}} \rho_{(n),l,\beta = 2}^{\rm bulk}(X_1,\dots,X_n)
 \end{equation}
 for some $\{ \rho_{(n),l,\beta = 2}^{\rm bulk}(X_1,\dots,X_n) \}$.

 Establishing such a result has consequences for the large $N$ expansion of the probability density function of the distribution function for the spacing between a given eigenvalue and its $l$-th neighbour to the right. These distribution functions are fully determined by the probabilities $\{ E_N(k;(0,\phi)) \}_{k=0}^N$ for the interval $(0,\phi)$ containing exactly $k$ eigenvalues, which in turn have for their generating function the expansion
 \begin{equation}\label{4.2}
 \mathcal E_N^{(\cdot)}((0,\phi);\xi)
 := \sum_{k=0}^N (1 - \xi)^k
  E_N^{(\cdot)}(k;(0,\phi))
 = 1 +
 \sum_{k=1}^N {(-\xi)^k \over k!} \int_0^\phi d \theta_1 \cdots
 \int_0^\phi d \theta_k \,  \rho_{(k)}^{(\cdot)}(\theta_1,\dots,\theta_k) ,
  \end{equation}
  valid for any one-dimensional point process $(\cdot)$ on the circle
 (see e.g.~\cite[\S 8.1]{Fo10} in relation to the above theory).
 Thus, specialising to the case of the CUE, it follows by substituting (\ref{4.1}) in (\ref{4.2}) that for large $N$
  \begin{equation}\label{4.3}
 \mathcal E_N^{\rm CUE}((0,2 \pi s / N);\xi) \sim
\mathcal E^{\rm bulk}_{0,\beta = 2}((0,s);\xi) + \sum_{l=1}^\infty {1 \over N^{2l} } \mathcal E^{\rm bulk}_{l,\beta = 2}((0,s);\xi),
  \end{equation}
  for some $\{\mathcal E^{\rm bulk}_{l,\beta = 2}((0,s);\xi)  \}$ (convergence of the implied infinite sums at each order which form the $\mathcal E^{\rm bulk}_{l,\beta = 2}((0,s);\xi)$ can readily be established;
  see \cite[Lemma 2.1]{BFM17}).
  In fact one has available the explicit 
  small-$s$ expansion
  (\cite{TW94c}, \cite[Eq.~(8.79), extended to two further terms according to the methodology therein]{Fo10})
  \small \begin{align}\label{1.15}
  & \mathcal E_N^{\rm CUE}((0,2 \pi s / N);\xi) = 1 - \xi s +
  {(1 - 1/N^2) \xi^2 \pi^2 s^4 \over 36} -
  {(1-1/N^2)(2 - 3/N^2) \over 1350} \xi^2 \pi^4 s^6 \nonumber \\
 & + {(1-1/N^2)(1 - 2/N^2)(3 - 5/N^2) \over 52920} \xi^2 \pi^6 s^8 -{(1-4/N^2)(1 - 1/N^2)^2 \over 291600} \xi^3 \pi^6 s^9  \nonumber\\ 
 & - {(1-1/N^2)(2 - 3/N^2)(1 - 5/N^2+7/N^4) \over 1275750} \xi^2 \pi^8 s^{10} +{(1-1/N^2) (1 - 4/N^2)^2(6 - 19/N^2) \over 29767500} \xi^3 \pi^8 s^{11}  \nonumber\\
 & +{\rm O}(s^{12}),
  \end{align}
  \normalsize
  which is immediately observed to permit an expansion in $1/N^2$ to each order in $s$ exhibited. In keeping with the text below (\ref{3.3}), one remarks that computable functional forms of the term at order $1/N^2$ for the distribution of the spacing between eigenvalues, and this distribution after a thinning of the eigenvalue sequence (since \cite{BP04} it is known that the latter is
  controlled by the parameter $\xi$), both of which relate to $\mathcal E^{\rm bulk}_{l,\beta = 2}((0,s);\xi) |_{l=1}$,
   have been used in \cite{FM15,BFM17} for purposes of interpreting empirically determined spacing distributions for the Riemann zeros at large height.
   A major finding of the present work in the regards --- see Proposition \ref{P2.1} --- is that correction at order $1/N^2$ is simply related to the leading distribution by a derivative operation.

  \subsection{Relationship to earlier work}
  Prominence to the structure of the large $N$ asymptotic expansion of quantities in random matrix theory comes from several sources. The first, and probably the best known, relates to the $1/N$ expansion associated with the so-called loop equations \cite{Mi04}. In this formalism the connected correlators for the product of linear statistics 
  \begin{equation}\label{LS}
  \prod_{l=1}^s \sum_{j=1}^N {1 \over y_l - \lambda_j},
  \end{equation}
  each of which can be considered as the generating function for products of $s$ monomials, are by an hypothesis on the existence of the $1/N$ expansion
  (this hypotheses has subsequently been proved in a number of prominent settings \cite{BG11})
  in the so-called global scaling limit fully determined by a triangular system of equations. As a concrete example, consider following \cite{BMS11,MMPS12,WF14}
  a loop equation analysis applied to the Gaussian $\beta$ ensemble characterised by the eigenvalue PDF proportional to 
  \begin{equation}\label{G1}
  \prod_{l=1}^N e^{-\beta N \lambda_l^2}
  \prod_{1 \le j < k \le N} | \lambda_k - \lambda_j|^\beta
   \end{equation}
   (specifically here the $N$-dependent factor in the Gaussian implies the limiting density is supported on a compact interval, $(-1,1)$ to be precise, which is the requirement of global scaling). In relation to the averaged monomials $\tilde{m}_k^{\rm G}(\beta,N) :=
   N^{-k/2-1} \langle \sum_{l=1}^N \lambda_l^k \rangle$, where the average is with respect to (\ref{G1}), the $1/N$ expansion in fact terminates at order $N^{-k/2}$. As explicit examples, taking into account that by symmetry all the odd moments vanish, for the second and fourth moments one can calculate that
   (see e.g.~\cite{DE06})
   \begin{align}\label{D1x}
   \tilde{m}_2^{\rm G}(\beta,N) & = 1 + N^{-1}(-1 + \tau^{-1}), \nonumber \\
  \tilde{m}_4^{\rm G}(\beta,N) & = 2 + 5 N^{-1}(-1 + \tau^{-1}) +
  N^{-2} (3 - 5 \tau^{-1} + 3 \tau^{-2}),
  \end{align}
  where $\tau:=\beta/2$. It is furthermore true that the moments satisfy the duality relation \cite{DE06}
   \begin{equation}\label{D1}
  \tilde{m}_k^{\rm G}(\beta,N)  = \tilde{m}_k^{\rm G}(4/\beta,
  -\beta N/2).
  \end{equation}
  (As in (\ref{3.1}), although now for a different reason, namely that the quantities in question are polynomials in all parameters, both sides of (\ref{D1}) have meaning for continuous $N$.)
  In particular, the duality relation implies that for $\beta = 2$ the moments are even in $N$ and so have an expansion in terms of $1/N^2$; for $\beta \ne 2$ the explicit results of (\ref{D1x}) exhibit a $1/N$ expansion.

  At the edge of the spectrum the Gaussian $\beta$-ensemble permits a soft edge scaling
  \cite{Fo93a}
   \begin{equation}\label{G1a}
  \lambda_l \mapsto 1 + x_l/2 N^{2/3}
   \end{equation}
  to a
  well defined statistical state with the eigenvalues spaced at order unity apart. Recent work of Bornemann \cite{Bo24,Bo25a,Bo25b} has for $\beta = 1,2$ and 4 established an asymptotic expansion of the density and the spacing distribution in powers of $(N')^{-2/3}$, $N' := N + {2 - \beta \over 2 \beta}$, with moreover functional forms in the corrections relating to the limiting functional form by certain derivative operations, multiplied by a polynomial\footnote{\label{FN1}Some assumptions are required. The mildest relates to the expansion being in powers of $(N')^{-2/3}$ to all orders for the densities, which in \cite{Bo25a} is shown to hold up to the first $m_* = 10$ terms from the corresponding soft edge expansion of the Christoffel-Darboux kernel in \cite{Bo24}. As commented in \cite{Bo25a} it is expected that a Riemann-Hilbert analysis analogous to that carried out in \cite{YZ23} can validate that the soft edge scaling asymptotic expansion of the Christoffel-Darboux kernel to all orders is in powers of $N^{-2/3}$.}
  . For example, in the case $\beta=2$, when (\ref{G1}) corresponds to the Gaussian unitary ensemble GUE${}^*$ --- here the asterisk indicates the use of global scaling with the limiting density supported on $(-1,1)$ --- one has 
  \cite[Th.~2.1 in the case $\xi = 1$]{Bo24} (extended to general $\xi$ in  \cite{Bo25b})
   \begin{multline}\label{D2}
   {1 \over 2 N^{2/3}}
  E_N^{{\rm GUE}{}^*}((1 + t/2 N^{2/3}, \infty);\xi) =  
   E_{\beta = 2}^{ \rm soft}((t, \infty);\xi)  \\
  + N^{-2/3} \Big (
   {t^2 \over 2} {d \over d t} E_{\beta = 2}^{ {\rm soft}}((t, \infty);\xi) -
   {3 \over 10} {d^2 \over d t^2} E_{\beta =2}^{{\rm soft}}((t, \infty);\xi) \Big ) +
   {\rm O} (N^{-4/3}).
   \end{multline}
   In \cite{FT19a}, in relation to the spectral density for the Gaussian $\beta$ ensemble with $\beta$ even, it was established that the leading correction to the limiting density occurs at order $(N')^{-2/3}$. Strictly speaking, the result was stated in terms of an $N$ dependent shift in $t$, which is equivalent to the stated $\beta$ dependent shift in the meaning of $N'$ up to this order; on this point  in relation to the cases $\beta =1$ and 2, see also 
   \cite{JM12}. One remarks too that the expansion (\ref{D2}), but with a weaker error bound, and a more complicated (but still computable) functional form of the $N^{-2/3}$ term was given in \cite{FT18}. 

   In the case of the Laguerre $\beta$-ensemble, specified by the eigenvalue PDF proportional to
   \begin{equation}\label{D3} 
   \prod_{l=1}^N \lambda_l^a e^{-\beta \lambda_l/2}
   \mathbbm 1_{\lambda_l > 0}
   \prod_{1 \le j < k \le N} | \lambda_k - \lambda_j |^\beta,
    \end{equation}
  the works of Bornemann \cite{Bo24,Bo25a} establish an analogous result at the soft edge (scaled neighbourhood of the largest eigenvalue) for $\beta = 1,2$ and 4, although
  the parameter playing the role of $N'$ is now more complicated and depends on $a$ (in relation to the expansion parameter for $\beta =1$ see too \cite{Ma12}), and the details of the polynomials in the functional forms of the expansion coefficients differ from those of the Gaussian case. At the hard edge, which is obtained by the scaling $\lambda_l \mapsto x_l/4N$, a recent result from
  \cite{Fo24} gives that particular gap probabilities permit expansion in powers of $N_{\rm L}^{-2}$, where $N_{\rm L} :=
  N + a/\beta$, although no results were obtained in relation to the functional form of the coefficients. To leading order, this result was first established in the earlier work \cite{FT19}, which also contains computation of the functional form of some corrections; in relation to the leading order correction at the hard edge with $\beta =2$ we reference too  \cite{EGP16,Bo16,PS16,HHN16}. Analogous results are also known at the hard of the Jacobi $\beta$ ensemble (see \cite[Prop.~4.6 and Remark 4.3.1]{Fo24}, \cite[Appendix A of arXiv version]{FLT21}, \cite{Wi24} for results relating to general $\beta$, and \cite{MMM19} in relation to $\beta = 2$. After hard edge scaling the correlations and spacing distributions no longer depend on $N$, but do depend on the Laguerre parameter, $a$ say. Asymptotic expansions with respect to $a$ and the re-centering of the position variables in relation to the hard edge to soft edge transition are of interest in the context of the length of the longest increasing subsequence problem \cite{BF03}. Results on the optimal expansion parameter and the resulting functional forms are given in \cite{BJ13,FM23,Bo24x,Bo24y}.

  The expansion in powers of $1/N$ for the bulk scald correlation functions about $\theta_l = 0$ of the circular Jacobi ensemble, specified by the eigenvalue PDF proportional to
  \begin{equation}\label{D4} 
  \prod_{j=1}^N e^{-q \theta_j}
  |1 - e^{i \theta_j} |^{\beta p}
  \prod_{1 \le j < k \le N} | e^{i \theta_k} - e^{i \theta_j} |^\beta, \quad \theta_l \in [0,2 \pi)
  \end{equation} 
  has been considered in the work \cite{FLT21}. There, it was established for $\beta = 1, 2$ and 4 in the case of the general $k$-point correlation function, and for $\beta$ even in the case of the density that in the variable $N_{\rm cJ} :=
  N + p$ the leading correction is proportional to $N_{\rm cJ}^{-2}$.

  \subsection{Summary of results}
  In Section \ref{S2a} the asymptotic expansion
  of the bulk scaled spacing distribution generating function $\mathcal P_N^{\rm CUE}(2 \pi s/N;\xi)$
  as defined by (\ref{4.2c})
   is established to be in powers of $N^{-2}$ to all orders; see (\ref{4.3b}).
  With the leading term denoted $\mathcal P_{0,\beta = 2}^{\rm bulk}(s;\xi)$, and the first correction equal to $N^{-2} \mathcal P_{1,\beta = 2}^{\rm bulk}(s;\xi) $, the functional form $\mathcal P_{1,\beta = 2}^{\rm bulk}(s;\xi)$ is shown,
  using a $\sigma$-Painlev\'e V characterisation,
  to be related to $\mathcal P_{0,\beta = 2}^{\rm bulk}(s;\xi)$ by the simple differential relation (\ref{5.1b}).
  The consequence of this result in the context of the empirical Riemann zeros spacing distribution at large height, and that of their thinning, is noted in
  Remark \ref{R2.1}.2.
  In Section \ref{S2} the explicit form of the $2 \times 2$ matrix correlation kernel specifying the Pfaffian point process form of the $\beta = 1$ and 4 cases of the circular $\beta$ ensemble is used to deduce that the large $N$ expansion of the general $n$-point correlations are in powers of $1/N^2$. This is the analogue of the CUE result (\ref{4.1}), following from an expansion of the (scalar) correlation kernel, and lifts too to imply an analogous asymptotic expansion of the spacing distribution generating functions. Moreover, the first correction in the asymptotic expansion is shown, again using $\sigma$-Painlev\'e characterisations, to be related to the leading term by a simple differential relation ((\ref{5.1e}) in the case $\beta =1$ and (\ref{5.1g}) in the case $\beta =4$).
  Also considered is the large $N$ expansion of the 
  structure function (this is essentially the Fourier transform  of the truncated two-point correlation function). Proposition \ref{P3.1} establishes an asymptotic expansion in powers of $N^{-2}$, and in (\ref{X6}) differential identities are given relating the first and second corrections to the limiting functional form.
Section \ref{S3} makes use of known expressions in terms of generalised hypergeometric functions based on Jack polynomials (see e.g.~\cite[Ch.~13]{Fo10})
 to establish that each $n$-point correlation with $\beta$ even is an even function of $N$. For the two-point function in this setting, an asymptotic expansion in inverse powers of $N$ can be established, which must therefore only involve even powers. In addition, it is shown that the leading 
correction in the case of the two-point correlation function is related to the limiting functional form via a simple differential identity of the form already established for $\beta =1,2$ and 4.  
A recursion identity for multiple integrals, relevant to expressing  higher order corrections in the $1/N^2$ expansion of the two-point correlation for the $\beta$ even circular ensemble to derivative operations of the limiting functional form, is given in Appendix A. Appendix B relates to the functional form of $\xi^k$ in (\ref{4.2}) for the circular $\beta$ ensemble with bulk scaling variables as in (\ref{4.3}), to leading order in $s$ but with $N$ general. This allows for determination of the respective terms in the $\beta$ generalisation of the power series expansion (\ref{1.15}), and thus provides data on the relation of the $1/N^2$ and $1/N^4$
corrections in relation to the limiting values.

\section{Bulk scaled spacing distribution for $\beta =2$}\label{S2a}
Let $p_N^{(\cdot)}(k;x)$ denote the probability that in the ensemble $(\cdot)$, given there is an eigenvalue at the origin, there is an eigenvalue at $x$ with exactly $k$ eigenvalues in between. Suppose that $(\cdot)$ is rotationally invariant and has been bulk scaled so that the eigenvalue density is unity. Analogous to (\ref{4.2}) we have   the generating function expansion\footnote{For $0 < \xi \le 1$, $\xi \mathcal P_N^{(\cdot)}(x;\xi)$ has the further interpretation as the spacing distribution in the situation that each eigenvalue is deleted with probability $(1-\xi)$ uniformly at random; see e.g.~\cite[\S 3.3]{BFM17}. 
We remark too of an alternative occurrence of the gap probability generating function
(\ref{4.1}) in random matrix theory beyond its literal interpretation.
Thus one has that the so-called power spectrum statistic, defined as the Fourier sum associated with the sequence of covariances for the level displacements, can be written in terms of (\ref{4.1}) with $\xi = 1 -z$, and $z$ on the unit circle in the complex plane \cite{RK23}. This is of much significance in the large $N$ analysis of this statistic for the CUE \cite{ROK20,RK23}, and its circular $\beta$ ensemble generalisation \cite{FW24}. }
\begin{equation}\label{4.2c}
 \mathcal P_N^{(\cdot)}(x;\xi)
 := \sum_{k=0}^{N-2} (1 - \xi)^k
  p_N^{(\cdot)}(k;x)
 =  \rho_{(2)}^{(\cdot)}(0,x)+
 \sum_{k=1}^{N-2} {(-\xi)^k \over k!} \int_0^x d x_1 \cdots
 \int_0^x d x_k \,  \rho_{(k+2)}^{(\cdot)}(0,x,x_1,\dots,x_k),
  \end{equation}
which moreover is related to (\ref{4.2}) by
\begin{equation}\label{4.2d}
 \mathcal P_N^{(\cdot)}(x;\xi) = {1 \over \xi^2}
 {d^2 \over dx^2} \mathcal E_N^{(\cdot)}((0,x);\xi),
  \end{equation}
  where again bulk scaling (with eigenvalue density unity) is assumed.

  It follows from the second expression in (\ref{4.2c}) and the $1/N^2$ expansion of the correlation functions that the bulk scaled generating function $(2 \pi /N)^2 \mathcal P_N^{\rm CUE}(2 \pi s/N;\xi)$ for large $N$, as for 
 $\mathcal E_N^{\rm CUE}(2 \pi s/N;\xi)$ in 
 (\ref{4.3}), has the asymptotic expansion
 \begin{equation}\label{4.3b}
 \Big ( {2 \pi  \over N} \Big )^2 \mathcal P_N^{\rm CUE}(2 \pi s / N;\xi) \sim
\mathcal P^{\rm bulk}_{0,\beta = 2}(s;\xi) + \sum_{l=1}^\infty {1 \over N^{2l} } \mathcal P^{\rm bulk}_{l,\beta = 2}(s;\xi).
  \end{equation}
Applying (\ref{4.2d}) to (\ref{1.15}) gives the small $s$ power series
\begin{align}
 &{ \mathcal P_{0,\beta=2}^{\rm bulk}(s;\xi) } =\frac{ \pi ^2 }{3}  s^2 -\frac{2  \pi ^4}{45}   s^4
   +\frac{ \pi ^6}{315}  s^6 
   -\frac{\pi ^6
   }{4050}  \xi s^7 
   -\frac{2 \pi ^8 
   }{14175}  s^8 +\frac{11 \pi ^8 }{496125}  \xi s^9+ {\rm O}(s^{10})
 \label{2.4}\\
&{ \mathcal P_{1,\beta=2}^{\rm bulk}(s;\xi) } =
-\frac{ \pi ^2}{3}  s^2 +\frac{\pi ^4}{9}   s^4
   -\frac{2\pi ^6}{135}  s^6 
   +\frac{\pi ^6}{675} 
  \xi s^7 
   +\frac{\pi ^8 }{945} s^8 
   -\frac{121 \pi ^8 }{595350}  \xi s^9   
   + {\rm O}(s^{10})
   \label{2.5} \\
   &{ \mathcal P_{2,\beta=2}^{\rm bulk}(s;\xi) } =
-\frac{ \pi ^4}{15}  s^4 +\frac{\pi ^6}{45}  s^6  -\frac{ \pi ^6}{450} \xi s^7 
   -\frac{2 \pi ^8 }{675}  s^8 
   +\frac{44 \pi ^8} {70875} 
   \xi s^9    + {\rm O}(s^{10}). \label{2.6}
   \end{align}
In these expansions, according to the second expression in (\ref{4.2c}), the terms independent of $\xi$ are obtained from the small $X-Y=:s$ expansion of the RHS of (\ref{3.3}). In relation to the latter, with the notation
$$
\Big ( {2 \pi \over N} \Big )^2 \rho_{(2)}^{\rm CUE}
 ( 2 \pi s/N,0) \sim
\rho_{(2),0,\beta=2}^{\rm bulk}(s,0) + {1 \over N^2} \rho_{(2),1,\beta=2}^{\rm bulk}(s,0) +
{1 \over N^4} \rho_{(2),2,\beta=2}^{\rm bulk}(s,0) + \cdots
$$
as is consistent with (\ref{4.1}),
we observe the simple differential functional relation 
 \begin{equation}\label{5.1}
\rho_{(2),1,\beta=2}^{\rm bulk}(s,0) 
= - {1 \over 12} {d^2 \over d s^2} \bigg (
s^2 \rho_{(2),0,\beta=2}^{\rm bulk}(s,0) \bigg ).
\end{equation}

Remarkably this same relation holds true between $\mathcal P^{\rm bulk}_{0,\beta = 2}(s;\xi)$ and  $\mathcal P^{\rm bulk}_{1,\beta = 2}((0,s);\xi)$ (preliminary evidence is the validity of such a relation between the power series (\ref{2.4}) and (\ref{2.5})), which then is the bulk scaling analogue of (\ref{D2}).

\begin{prop}\label{P2.1}
For the asymptotic expansion (\ref{4.3b}) of the
generating function (\ref{4.2c}) in the case of the CUE we have
\begin{equation}\label{5.1b}
 \mathcal P_{1,\beta=2}^{\rm bulk}(s;\xi) =
 -{1 \over 12} {d^2 \over ds^2}
 \bigg ( s^2 \mathcal P_{0,\beta=2}^{\rm bulk}(s;\xi) \bigg ).
\end{equation}
Equivalently, with reference to the asymptotic expansion (\ref{4.3}),
\begin{equation}\label{5.1c}
 {\mathcal E}_{1,\beta=2}^{\rm bulk}(s;\xi) =
 -{1 \over 12}
  s^2  {d^2 \over ds^2} \mathcal E_{0,\beta=2}^{\rm bulk}(s;\xi). 
\end{equation}
\end{prop}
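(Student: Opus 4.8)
The plan is to prove the equivalent identity (\ref{5.1c}) and then deduce (\ref{5.1b}). Write $\mathcal E_l := \mathcal E_{l,\beta=2}^{\rm bulk}(s;\xi)$, $\mathcal P_l := \mathcal P_{l,\beta=2}^{\rm bulk}(s;\xi)$, and let a prime denote $d/ds$. The term-by-term form of (\ref{4.2d}) under bulk scaling reads $\mathcal P_l = \xi^{-2}\mathcal E_l''$ for every $l$; substituting this into (\ref{5.1b}) and cancelling $\xi^{-2}$ turns it into $\mathcal E_1'' = -\tfrac1{12}(s^2\mathcal E_0'')''$. Since the series (\ref{2.4})--(\ref{2.5}) (equivalently (\ref{1.15})) give $\mathcal E_0 = 1 - \xi s + {\rm O}(s^4)$ and $\mathcal E_1 = {\rm O}(s^4)$, both $s^2\mathcal E_0''$ and $\mathcal E_1$ vanish to second order at $s=0$, so integrating twice produces no constants and returns exactly (\ref{5.1c}). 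Finally, using $\mathcal E_0''/\mathcal E_0 = (\log\mathcal E_0)'' + ((\log\mathcal E_0)')^2$ one checks that (\ref{5.1c}) is equivalent to the compact relation
\[ U_1 = -\tfrac{1}{12}\big( s\sigma_0' - \sigma_0 + \sigma_0^2 \big), \qquad \sigma_0 := s\,(\log\mathcal E_0)', \quad U_1 := \mathcal E_1/\mathcal E_0, \]
and it is this form, in which $U_1$ is the coefficient of $N^{-2}$ in $\log\mathcal E_N^{\rm CUE}((0,2\pi s/N);\xi)$, that I would verify.

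The engine is the $\sigma$-Painlev\'e characterisation. I would begin from the $\sigma$-Painlev\'e VI equation satisfied by the finite-$N$ CUE generating function $\mathcal E_N^{\rm CUE}((0,\phi);\xi)$ in the variable $t = \sin^2(\phi/2)$ (see \cite[Ch.~8]{Fo10}), whose coefficients carry the dependence on $N$ and $\xi$. Imposing the bulk scaling $\phi = 2\pi s/N$, so that $t = (\pi s/N)^2(1 + {\rm O}(N^{-2}))$, and inserting $\log\mathcal E_N^{\rm CUE}((0,2\pi s/N);\xi) = \log\mathcal E_0 + N^{-2}U_1 + {\rm O}(N^{-4})$, I would collect powers of $1/N^2$. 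At leading order the equation must reproduce the Jimbo--Miwa--Mori--Sato $\sigma$-Painlev\'e V equation characterising $\sigma_0$ (the sine-kernel limit), which fixes the normalisations and confirms the set-up. At order $N^{-2}$ the equation linearises: one obtains a second-order linear inhomogeneous ODE for $U_1$ (equivalently for the $N^{-2}$ correction to the $\sigma$-function), whose homogeneous part is the linearisation of the $\sigma$-PV operator about $\sigma_0$ and whose inhomogeneity is an explicit functional of $\sigma_0$ and $\sigma_0'$ generated jointly by the $N$-dependence of the PVI coefficients and by the $N^{-2}$ terms in the change of variables $t = t(s,N)$.

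The final step is to substitute the ansatz $U_1 = -\tfrac1{12}(s\sigma_0' - \sigma_0 + \sigma_0^2)$ into this linear ODE and verify that it solves it, repeatedly using the leading $\sigma$-PV relation to eliminate $\sigma_0''$ and to reduce every expression to a polynomial in $s,\sigma_0,\sigma_0'$; uniqueness is then secured by matching the small-$s$ behaviour against (\ref{2.4})--(\ref{2.5}), which excludes the addition of any homogeneous solution. The main obstacle I anticipate is exactly this verification: expanding a second-degree ODE to order $N^{-2}$ through an $N$-dependent change of variables is delicate, and confirming that the second-derivative ansatz threads the linearised equation uses the nonlinear leading relation as a genuine constraint rather than following term by term. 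As an independent check --- and a possible alternative route bypassing Painlev\'e --- I note that the scaled finite-$N$ kernel (\ref{3.1}) expands as $K_\infty + N^{-2}K_1 + {\rm O}(N^{-4})$ with $K_1(X,Y) = \tfrac{\pi^2}{6}(X-Y)^2 K_\infty(X,Y)$, whence first-order Fredholm perturbation theory gives $U_1 = -\xi\,\tr\big[(I - \xi K_\infty)^{-1}K_1\big]_{(0,s)}$; comparing this against the target relation directly exhibits the factor $(X-Y)^2$ as the origin of the $s^2$ and of the second derivative.
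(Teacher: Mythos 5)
Your proposal is correct and follows essentially the same route as the paper: reduce (\ref{5.1b}) to (\ref{5.1c}) via (\ref{4.2d}), recast the claim as a differential relation between the first-correction transcendent and $\sigma_0$ (your $U_1=-\tfrac{1}{12}(s\sigma_0'-\sigma_0+\sigma_0^2)$ is precisely the integrated form of the paper's relation $\sigma_1=-\tfrac{1}{12}(2t\sigma_0\sigma_0'+t^2\sigma_0'')$), verify it against the second-order linear ODE satisfied by the correction using the nonlinear $\sigma$-PV equation (\ref{d1y}) to eliminate higher derivatives, and pin down the solution by the small-$s$ boundary data. The only difference is that you propose to re-derive that linear ODE from the finite-$N$ $\sigma$-PVI under bulk scaling, whereas the paper simply cites it from Forrester--Mays \cite[Prop.~3.1]{FM15}.
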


\begin{proof}
Making use of (\ref{4.2d}) in  (\ref{4.3}) and comparing with (\ref{4.3b}) shows the equivalence between (\ref{5.1b}) and (\ref{5.1c}). We will henceforth consider (\ref{5.1c}). 

Let $\mathbb K_s$ denote the integral operator on $(0,s)$ with kernel $K_\infty(x,y)$ as specified in (\ref{2.1a}). It is a classical result in random matrix theory \cite[\S 6.3]{Me04}, \cite[\S 9.3]{Fo10} that 
\begin{equation}\label{5.1d}
\mathcal E_{0,\beta=2}^{\rm bulk}(s;\xi) = \det
( \mathbb I - \xi \mathbb K_s),
\end{equation}
where here $\det(\cdot)$ refers to the Fredholm determinant, and $\mathbb I$ is the identity operator. Let $\mathbb L_s$ denote the integral operator on $(0,s)$ with kernel 
\begin{equation}\label{KL}
L_\infty(x,y) : = (\pi (x-y)/6) \sin(\pi (x - y)). 
\end{equation}
Less well known, but nonetheless available in the literature \cite{FM15,BFM17}, is the operator expression
\begin{equation}\label{5.1d1}
\mathcal E_{1,\beta=2}^{\rm bulk}(s;\xi) = - \det
( \mathbb I - \xi \mathbb K_s) {\rm Tr} (
(\mathbb I - \xi \mathbb K_s)^{-1} \xi \mathbb L_s).
\end{equation}
Thus one strategy to prove (\ref{5.1c}) would be to establish the identity implied by the forms in (\ref{5.1d}) and (\ref{5.1d1}). In the case that  $\mathbb K_s$ in (\ref{5.1d}) is the Airy kernel and $\mathbb L_s$ originates from the $\beta = 2$ soft edge analogue of the expansion (\ref{3.3}) of the kernels for the Gaussian and Laguerre unitary ensembles, such identities have been systematically considered in \cite{Bo25a}.

An alternative strategy, used to establish identities in the same general class as (\ref{5.1c}) (i.e.~linking a leading order distribution function, itself permitting a Fredholm determinant form, to its leading order correction by a differential relation) makes use of alternative expressions in terms of sigma Painlev\'e transcendents \cite{FPTW19}, \cite{FM23}. In this regards, one recalls that it is pioneering result in the theory of integral systems due to the Kyoto group \cite{JMMS80} that an alternative to (\ref{5.1d}) is the so-called $\tau$-function formula for a particular Painlev\'e V system
\begin{equation}\label{KS}
\mathcal E_{0,\beta=2}^{\rm bulk}(s;\xi) = \exp \int_0^{\pi s} {\sigma_0 (t;\xi) \over t} \, dt,
\end{equation}
where $\sigma_0$ satisfies the 
particular $\sigma$-Painlev\'e V
differential equation
\begin{equation}\label{d1y}
(t \sigma_0'')^2 + 4 (t \sigma_0' - \sigma_0)(t \sigma_0' - \sigma_0 + (\sigma_0')^2) = 0
\end{equation}
with small $t$ boundary condition 
\begin{equation}\label{d2y}
\sigma_0(t;\xi) = - {\xi \over \pi} t - {\xi^2 \over \pi^2} t^2 + {\rm O}(t^3).
\end{equation}
Some years later, Forrester and Mays \cite{FM15} obtained that
\begin{equation}\label{KSa}
\mathcal E_{1,\beta=2}^{\rm bulk}(s;\xi) =  
\mathcal E_{0,\beta=2}^{\rm bulk}(s;\xi) 
 \int_0^{\pi s} {\sigma_1(t) \over t} \, dt,
\end{equation}
where $\sigma_1(t)$ satisfies a second order linear differential equation
\begin{equation}\label{Sig2}
A(t) \sigma_1'' + B(t) \sigma_1' + C(t) \sigma_1 = D(t),
\end{equation}
where the functions $A(t), \dots, D(t)$ depend on $\sigma_0$ and possibly its first and second derivative (e.g.~$A(t)  = 2 t^2 \sigma_0''$ --- for the details of the others see \cite[Prop.~3.1]{FM15}), subject to the small $t$ boundary condition
\begin{align}\label{bcs}
\sigma_1(t;\xi)  = - \Big ( t^4 {\xi^2 \over 9 \pi^2} + t^5 {5 \xi^3 \over 36 \pi^3} + {\rm O}(t^6) \Big ).
\end{align}

Substituting (\ref{KS}) and (\ref{KSa}) in (\ref{5.1c}) shows that the latter is valid provided
\begin{equation}\label{Sig3}
\sigma_1 = - {1 \over 12} \Big (  2 t \sigma_0 \sigma_0' + t^2 \sigma_0'' \Big ).
\end{equation}
Substituting a suitable extension of the boundary condition (\ref{d2y}) \cite[Eq.~(3.20)]{FM15} in this equation gives consistency with (\ref{bcs}). To verify (\ref{Sig3}) itself, as done in similar situations in \cite{FPTW19} and \cite{FM23}, the basic idea is to show that the RHS satisfies (\ref{Sig2}), using knowledge of the fact that $\sigma_0$ satisfies (\ref{d1y}). For this purpose, one begins by noting that differentiating the latter shows
$$
t^2 \sigma_0''' + t \sigma_0'' + 6 t (\sigma_0')^2 + 4 t^2 \sigma_0' -4 \sigma_0 (t + \sigma_0') = 0,
$$
which allows for the dependence on $\sigma_0'''$ in the resulting equation to be eliminated. A further differentiation shows that the fourth derivative as results from
differentiating (\ref{Sig3}) twice
can similarly be eliminated, leaving an identity in $\{ \sigma_0, \sigma_0',\sigma_0'',t \}$. With the help of computer algebra, the latter can then be checked upon direct use of (\ref{d1y}).
\end{proof}

\begin{remark}  \label{R2.1} ${}$ \\
1.~A celebrated approximation to 
$\mathcal P_{0,\beta =2}(s;\xi)|)|_{\xi = 1}$ is
the Wigner surmise $p_{\beta = 2}^{\rm W}(s) =
{32 s^2 \over \pi^2} e^{-4 s^2/\pi}$; see e.g.~\cite{Ha00}. Substituting in (\ref{5.1b}) with $\xi = 1$ and graphically comparing against the exact functional form 
of $\mathcal P_{1,\beta =2}(s;\xi)|)|_{\xi = 1}$
obtained by substituting instead
$\mathcal P_{0,\beta =2}(s;\xi)|)|_{\xi = 1} = {d^2 \over d s^2} \det (\mathbb I - \mathbb K_{s})$ and where $\mathbb K_{s}$ is as in
(\ref{5.1d}), with the Fredholm determinant herein computed according to Bornemann's method \cite{Bo08,Bo10}, shows the same high level of graphical accuracy for the approximate leading correction term as does the Wigner surmise for the limiting distribution; see Figure \ref{Fig1}. \\
2,~For a fixed $N$, the quantity $(2 \pi/N)^2 \mathcal P_N^{\rm CUE}(2 \pi s / N;\xi)$ can be computed empirically from numerically generated eigenvalues of
CUE matrices for all $0 < \xi \le 1$. Thus for $\xi = 1$ this is just the spacing distribution for consecutive eigenvalues, while for $0< \xi < 1$ it is the spacing
distribution after deleting each eigenvalue with probability $(1-\xi)$ \cite{BP04}. Subtracting this from  $\mathcal P_{0,\beta=2}^{\rm bulk}(s;\xi)$ and scaling by
$N^2$ will then give the graph of $\mathcal P_{1,\beta =2}(s;\xi)|)$ up to corrections of order $1/N^2$.
In accordance with an hypothesis first put forward by Keating and Snaith \cite{KS00a} and further developed in \cite{BBLM06,FM15,BFM17}, 
this same effect holds true for the empirical determination of  $\mathcal P_N^{(\cdot)}(x;\xi)$ for the Riemann zeros at large height. Indeed one can observe (up to scaling and statistical fluctuations)
the graphical form of Figure \ref{Fig1} in the analysis of the latter carried out in \cite{BBLM06,FM15} --- see in particular \cite[Fig.~10]{FM15}.
3.~The structure of (\ref{5.1b}) shows immediately that $\int_0^\infty s^j \mathcal P_{1,\beta=2}^{\rm bulk}(s;\xi) \, ds = 0$ for $j=0,1$. To anticipate this, note
from the finite $N$ definition of $\mathcal P_N^{(\cdot)}(x;\xi)$
(\ref{4.2c}), that the requirements that for the normalisation and the first moment $\int_0^\infty x^j p_N^{(\cdot)}(k;x) \, dx = 1 \, (j=0)$ and $k \, (j=1)$ (the latter with bulk scaling, density unity, imposed) tell us $\int_0^\infty s^j  \mathcal P_{1,\beta=2}^{\rm bulk}(s;\xi) \, ds$ for $j=0,1$ is independent of 
$N$. \\
4.~In the extension of the expansion (\ref{D2}) as an asymptotic series in powers of $N^{-2/3}$,
the explicit form of the functional form at order $N^{-4/3}$ is given as
$\sum_{l=1}^4 p_l(t) {d^l \over d t^l} E_{\beta = 2}^{\rm soft}((t,\infty);\xi)$ for certain (low order) polynomials $\{ p_l(t) \}$, with analogous functional forms conjectured at higher order (made explicit at order $N^{-2}$) \cite{Bo24,Bo25b}.
However, in relation to ${\mathcal P}^{\rm bulk}_{2,\beta=2}(s;\xi)$ in (\ref{4.3b}), we have yet to find evidence for a generalisation of (\ref{5.1b}) using as data the power series in
(\ref{2.4}) and (\ref{2.6}). 
For the two-point correlation function which is the $\xi = 0$ case of $\mathcal P_{0,\beta =2}(s;\xi)$, from the explicit functional form exhibited by extending the expansion (\ref{3.3}) to O$(N^{-4})$, one can check that
        \begin{equation}\label{R1z}
\rho_{(2),2,\beta=2}^{\rm bulk}(s,0) 
= - {(\pi s)^2  \over 60} {d^2 \over d s^2} \bigg (
s^2 \rho_{(2),0,\beta=2}^{\rm bulk}(s,0) \bigg );
        \end{equation}
        cf.~(\ref{5.1}). However, as a candidate for linking the O$(N^{-4})$ term for the spacing generating function to its limiting form,
        this is incompatible with the 
        $\xi$-dependent terms in (\ref{2.4}) and (\ref{2.6}).
In Appendix B we 
provide further data on the relation between the powers series (\ref{2.4}), and (\ref{2.5}) and (\ref{2.6}), for the leading small-$s$ term at each power of $\xi$.

\end{remark}

\begin{figure}
\centering
\includegraphics[width=3.5in]{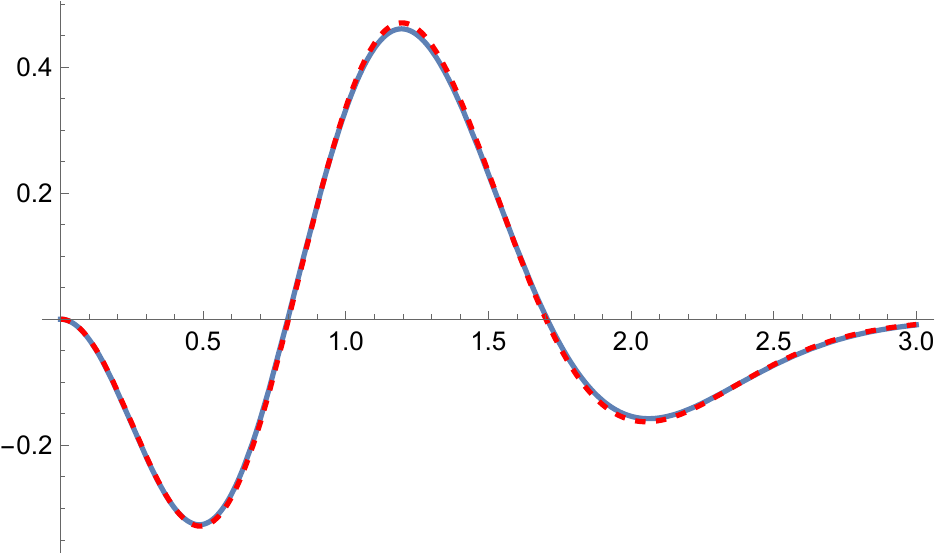}
\caption{[color-on-line] Exact functional form of $\mathcal P_{1,\beta =2}(s;\xi)|)|_{\xi = 1}$ (solid blue line) plotted together with the approximation obtained by substituting the Wigner surmise for $\mathcal P_{0,\beta =2}(s;\xi)|)|_{\xi = 1}$ in (\ref{5.1b}) (dashed red line).}
\label{Fig1}
\end{figure}
    
\section{Bulk scaling expansion of the correlations for $\beta = 1$ and $\beta = 4$}\label{S2}
\subsection{Pfaffian point process viewpoint}
The eigenvalue PDF \eqref{1.1A} with $\beta=1$ and $4$ corresponds to  circular orthogonal and symplectic ensembles respectively, denoted COE and CSE, which are known to form Pfaffian point processes (see e.g.~\cite[Ch.~6]{Fo10}). Specifically, the $k$-point correlation functions have the Pfaffian form
\begin{align*}
    \rho_{(n)}^{(\cdot)}(\theta_1,\dots,\theta_n)=\pf [K_N^{(\cdot)}(\theta_j,\theta_k)]_{j,k=1,\dots,n},
\end{align*}
where the kernels are given by $2\times 2$ matrices
\begin{align}\label{K14}
    K^{{\rm COE}}_N(\theta,\phi)=\left[\begin{array}{cc}
        J_N(\theta-\phi) & S_N(\theta-\phi) \\
        -S_N(\theta-\phi) & -D_N(\theta-\phi)
    \end{array}\right],\quad K^{{\rm CSE}}_N(\theta,\phi)=\frac{1}{2}\left[\begin{array}{cc}
        I_{2N}(\theta-\phi) & S_{2N}(\theta-\phi) \\
        -S_{2N}(\theta-\phi) & -D_{2N}(\theta-\phi)
    \end{array}\right].
\end{align}
Here the entries are given by functions
\begin{align*}
    &S_N(\theta)=K_N^{\rm CUE} (\theta, 0) = {1 \over 2 \pi} {\sin (N\theta/2) \over \sin(\theta/2)},\quad D_N(\theta)=\frac{d}{d\theta}S_N(\theta),\\
    & I_N(\theta)=\int_0^\theta S_N(\phi)d\phi,\quad J_N(\theta)=I_N(\theta)-\epsilon_N(\theta),
\end{align*}
where $\epsilon_N(\theta)$ depends on the parity of $N$ \cite[\S 11.3]{Me04}
\begin{align*}
    &\epsilon_{2n}(\theta)=\left\{\begin{array}{cc}
        \frac{1}{2}(-1)^m,& 2\pi m<\theta<2\pi(m+1),   \\
        0 & \theta=2\pi m . 
    \end{array}\right. \\
    &\epsilon_{2n+1}(\theta)=\left\{\begin{array}{cc}
        m+\frac{1}{2},& 2\pi m<\theta<2\pi(m+1),   \\
        m & \theta=2\pi m . 
    \end{array}\right.
\end{align*}
In the bulk scaling, unlike $\frac{1}{N}S_N({2 \pi X\over N})$, the scaled functions $D_N,I_N$ and $J_N$ are odd functions in $N$. Nevertheless, the $n$-point correlation functions are still even in $N$ which can be seen through the fact that the functions $I_N$($J_N$)/$D_N$ appears exclusively only in the odd/even rows of the Pfaffian. 
Of particular interest are the functional forms
of the leading correction in comparison to the limit itself (these functional forms being classical \cite{Me04}) in the case of the two-point function.

\begin{prop}
For $\beta =1,4$, and for certain functional forms $\{ \rho_{(n),l,\beta }^{\rm bulk} \}_{l=0,1,\dots,}$, we have
\begin{equation}\label{4.1B}
\Big ( {2 \pi \over N} \Big )^n
\rho_{(n),\beta}\bigg ( {2 \pi X_1 \over N},\dots,
 {2 \pi X_n \over N} \bigg ) \sim
 \rho_{(n),0,\beta }^{\rm bulk}(X_1,\dots,X_n) +
 \sum_{l=1}^\infty {1 \over N^{2l}} \rho_{(n),l,\beta }^{\rm bulk}(X_1,\dots,X_n).
 \end{equation}
 Specifically, in the case $n=2$ of the two-point correlation
 \begin{align*}
&  \rho_{(2),0,\beta=1 }^{\rm bulk}(X,0)=
1-\bigg(\frac{\sin \pi X}{\pi X}\bigg)^{2}+\frac{(\sin\pi X-\pi X\cos\pi X)(\pi {\rm sgn}(X)-2{\rm Si}(\pi X))}{2\pi^2X^2} \\
& \rho_{(2),1,\beta=1 }^{\rm bulk}(X,0)=
\frac{1}{{12 }}\left(-4 \sin ^2(\pi  X)-\frac{2 (\sin (\pi  X)-\pi  X \cos (\pi 
   X))^2}{\pi ^2 X^2}\right.\\
   &\hspace*{3cm} -(\pi{\rm sgn}(X) -2 {\rm Si}(\pi  X)) (\sin (\pi  X)+\pi  X \cos
   (\pi  X))\bigg) \\
& \rho_{(2),0,\beta=4 }^{\rm bulk}(X,0)=
\frac{1}{4}\bigg(1-\bigg(\frac{\sin \pi X}{\pi X}\bigg)^{2}\bigg)-\frac{{\rm Si} (\pi X) (\sin \pi X-\pi X\cos \pi X)}{4(\pi X)^2} \\
&\rho_{(2),1,\beta=4 }^{\rm bulk}(X,0)=
\frac{1}{96}\bigg(1-3\sin^2\pi X+\frac{\sin^2\pi X}{(\pi X)^2}-\frac{\sin 2\pi X}{\pi X}\\
 &\hspace*{3cm} -{\rm Si (\pi X)}(\pi X\cos \pi X+\sin\pi X)\bigg),
   \end{align*}
   where $
    {\rm Si}(x):=\int_0^x \frac{\sin t}{t}dt$. 
\end{prop}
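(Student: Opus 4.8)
The plan is to work directly from the Pfaffian structure (\ref{K14}) and to reduce everything to the CUE expansion (\ref{3.2}), exploiting the elementary transformation law $\pf(B A B^{\rm T}) = \det(B)\,\pf(A)$ for antisymmetric $A$. First I would prove (\ref{4.1B}) for general $n$. Setting $\theta_j-\theta_k = 2\pi(X_j-X_k)/N$ and conjugating the $2n\times 2n$ matrix $[K_N^{\rm COE}(\theta_j,\theta_k)]$ by the block-diagonal $B=\diag(g,\dots,g)$ with $g=\diag(1,2\pi/N)$, one has $\det B=(2\pi/N)^n$, so the left side of (\ref{4.1B}) equals $\pf[\tilde K_N(X_j,X_k)]$ with blocks $g K_N^{\rm COE} g^{\rm T}$. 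The three distinct scaled entries are $J_N(2\pi Z/N)$ in the $(1,1)$ slot, $(2\pi/N)S_N(2\pi Z/N)=\tfrac1N\sin(\pi Z)/\sin(\pi Z/N)$ in the $(1,2)$ and $(2,1)$ slots, and $(2\pi/N)^2 D_N(2\pi Z/N)$ in the $(2,2)$ slot. The key point is that each is an \emph{even} function of $N$ for continuous $N$: the $(1,2)$ entry is (\ref{3.1}) with $Y=0$ and is manifestly even; the $(2,2)$ entry is the $X$-derivative of the $(1,2)$ entry, hence even; and the smooth part of $J_N$, namely $I_N(2\pi Z/N)=\int_0^Z (2\pi/N)S_N(2\pi u/N)\,du$, is the integral of an even function. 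Since a Pfaffian is a polynomial in its entries, $\pf[\tilde K_N]$ is even in $N$ and inherits the expansion in $1/N^2$ of (\ref{4.1B}); the CSE case differs only by $N\mapsto 2N$, an overall factor, and the use of $I_{2N}$ rather than $J_{2N}$.

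The one ingredient requiring care is $\epsilon_N$ in $J_N=I_N-\epsilon_N$, defined through the parity of the integer $N$ and so without an obvious analytic continuation. However, for fixed $X\neq 0$ and $N$ large the argument $2\pi Z/N$ lies in $(-2\pi,2\pi)\setminus\{0\}$, where both $\epsilon_{2n}$ and $\epsilon_{2n+1}$ equal the single value $\tfrac12{\rm sgn}(Z)$; thus $\epsilon_N$ contributes only the $N$-independent constant $\tfrac12{\rm sgn}(X-Y)$ to the limiting kernel and nothing to any correction, leaving the even-in-$N$ structure intact. This is precisely the structural content of the row/column remark below (\ref{K14}): under the alternative uniform scaling the odd-in-$N$ factors $I_N,J_N,D_N$ occupy only the $(1,1)$ and $(2,2)$ slots, and a short counting argument (equal numbers of pure type-$1$ and pure type-$2$ matched pairs) shows every monomial in the Pfaffian expansion contains an even number of them.

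For the explicit $n=2$ forms I would evaluate the $4\times 4$ Pfaffian directly. Using $\pf\,M=M_{12}M_{34}-M_{13}M_{24}+M_{14}M_{23}$ together with $S_N(0)=N/2\pi$ and $D_N(0)=J_N(0)=0$, the scaled two-point function collapses to
\begin{equation*}
\Big(\tfrac{2\pi}{N}\Big)^2 \rho_{(2),\beta=1}\Big(\tfrac{2\pi X}{N},0\Big)
= 1-\hat S_N(X)^2+J_N\big(\tfrac{2\pi X}{N}\big)\,\Big(\tfrac{2\pi}{N}\Big)^2 D_N\big(\tfrac{2\pi X}{N}\big),
\end{equation*}
where $\hat S_N(X):=(2\pi/N)S_N(2\pi X/N)$. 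Into this I substitute the expansion read from (\ref{3.2}), namely $\hat S_N=\sin(\pi X)/(\pi X)+N^{-2}(\pi/6)X\sin\pi X+\mathrm O(N^{-4})$, its $X$-derivative for the $(2,2)$ entry, and the termwise integral of this expansion plus the constant $-\tfrac12{\rm sgn}(X)$ for $J_N$, the integration producing ${\rm Si}(\pi X)$. Collecting the $N^0$ and $N^{-2}$ coefficients yields $\rho_{(2),0,\beta=1}^{\rm bulk}$ and $\rho_{(2),1,\beta=1}^{\rm bulk}$; the $\beta=4$ forms follow identically with $N\mapsto 2N$, the overall factor $\tfrac14$, the corresponding bulk variable that produces the $\sin\pi X$ and ${\rm Si}(\pi X)$ of the stated expressions, and $I_{2N}$ (no ${\rm sgn}$ term) in place of $J_{2N}$. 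What remains is the routine trigonometric bookkeeping needed to recognise the collected coefficients in closed form.

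The main obstacle I anticipate is not any single computation but making the all-orders evenness watertight: one must confirm that the only non-analytic ingredient, $\epsilon_N$, is genuinely inert beyond leading order (as above), and that the CSE factor-of-two conventions are applied consistently, so that the claimed purely-$1/N^2$ expansion in (\ref{4.1B}) holds to every order rather than only through the first correction exhibited for $n=2$.
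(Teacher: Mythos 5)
Your proposal is correct and follows essentially the same route as the paper, which likewise deduces (\ref{4.1B}) from the Pfaffian structure (\ref{K14}) together with the parity in $N$ of the bulk-scaled entries (the paper phrases this as the odd-in-$N$ factors $I_N/J_N$ and $D_N$ occupying only odd/even rows so that they occur in pairs, which is exactly the content of your conjugation by $\diag(1,2\pi/N)$ making every entry even in $N$), and then reads off the $n=2$ functional forms by termwise expansion of $S_N$, $D_N$, $I_N$ as you describe. Your explicit treatment of $\epsilon_N$ reducing to the constant $\tfrac12\,{\rm sgn}$ for large $N$ is a point the paper leaves implicit, and is handled correctly.
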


A differential relation relating the functional form of the leading correction to the limiting functional form analogous to (\ref{5.1}) can be observed.
\begin{cor}
For $\beta = 1,4$ we have
 \begin{equation}\label{5.1B}
\rho_{(2),1,\beta}^{\rm bulk}(s,0) 
= c_\beta {d^2 \over d s^2} \bigg (
s^2 \rho_{(2),0,\beta}^{\rm bulk}(s,0) \bigg )
\end{equation}
with $c_1 = - {1 \over 6}, c_4 = - {1 \over 24}$.
\end{cor}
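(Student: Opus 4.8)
The plan is to establish (\ref{5.1B}) by direct verification, using the explicit closed forms for $\rho_{(2),0,\beta}^{\rm bulk}(s,0)$ and $\rho_{(2),1,\beta}^{\rm bulk}(s,0)$ supplied by the preceding Proposition: one forms $s^2 \rho_{(2),0,\beta}^{\rm bulk}(s,0)$, differentiates twice, multiplies by $c_\beta$, and checks agreement with $\rho_{(2),1,\beta}^{\rm bulk}(s,0)$. This is the analogue for $\beta=1,4$ of the $\beta=2$ computation behind (\ref{5.1}), where $s^2\rho_{(2),0,\beta=2}^{\rm bulk}(s,0) = s^2 - (\sin\pi s)^2/\pi^2$ has second derivative $4\sin^2\pi s$, and multiplication by $-1/12$ returns $-\tfrac13\sin^2\pi s$. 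Throughout I restrict to $s>0$, where ${\rm sgn}(s)=1$ is constant and contributes no derivative, extending to $s<0$ afterwards by the evenness of each $\rho_{(2),l,\beta}^{\rm bulk}(s,0)$ in $s$.

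First I would pass to $y=\pi s$, so that $\tfrac{d}{ds}=\pi\tfrac{d}{dy}$ and $\tfrac{d^2}{ds^2}=\pi^2\tfrac{d^2}{dy^2}$, absorbing all factors of $\pi$; the only non-elementary derivative needed is ${\rm Si}'(y)=(\sin y)/y$. In these variables $s^2\rho_{(2),0,\beta=4}^{\rm bulk}(s,0)=(4\pi^2)^{-1}F(y)$ with $F(y)=y^2-\sin^2 y-{\rm Si}(y)\,h(y)$, and $s^2\rho_{(2),0,\beta=1}^{\rm bulk}(s,0)=\pi^{-2}G(y)$ with $G(y)=y^2-\sin^2 y+\tfrac12 h(y)\big(\pi-2\,{\rm Si}(y)\big)$, where $h(y):=\sin y - y\cos y$ is precisely the combination appearing in the leading two-point functions. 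The right-hand side of (\ref{5.1B}) then equals $\tfrac14 c_4 F''(y)$ for $\beta=4$ and $c_1 G''(y)$ for $\beta=1$, so the entire problem reduces to computing $F''$ and $G''$ in $y$. The key simplifying facts are $h'(y)=y\sin y$ and $h''(y)=\sin y + y\cos y$, the latter explaining why the combination $\sin\pi s + \pi s\cos\pi s$ is exactly what appears in the stated corrections.

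The computation organises itself by separating the terms that genuinely involve ${\rm Si}(y)$ from the elementary remainder (for $\beta=1$ there is in addition a $\pi$-weighted elementary piece from the ${\rm sgn}$ term). Since the ${\rm Si}$-weighted part of both $F$ and $G$ is proportional to ${\rm Si}(y)\,h(y)$, differentiating twice leaves the surviving transcendental term proportional to ${\rm Si}(y)\,h''(y)={\rm Si}(y)(\sin y + y\cos y)$, while the ${\rm Si}'(y)$ and ${\rm Si}''(y)$ contributions merge into the elementary sector. Matching the coefficient of ${\rm Si}(y)(\sin y + y\cos y)$ on the two sides of (\ref{5.1B}) pins down $c_\beta$ independently of the elementary sector, giving the single value $c_\beta=-1/(6\beta)$ (hence $c_1=-1/6$, $c_4=-1/24$, in line with $c_2=-1/12$ from (\ref{5.1})); the factor $1/6$ is the trace of the $\pi/6$ coefficient in the kernel expansion (\ref{3.2}). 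The elementary sector is then reduced, via $\cos 2y=1-2\sin^2 y$ and $2\sin y\cos y=\sin 2y$, to the four monomials $1,\ \sin^2 y,\ (\sin 2y)/y,\ (\sin^2 y)/y^2$, and compared term by term.

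I expect the main obstacle to be purely one of bookkeeping: the intermediate forms of $F''$ and $G''$ are lengthy, and it is essential to keep the ${\rm Si}$-sector and the elementary sector matched separately rather than all at once. A convenient built-in check, and the quickest way to catch a stray sign, is the level-repulsion constraint that each $\rho_{(2),l,\beta}^{\rm bulk}(s,0)$ must vanish like $s^\beta$ as $s\to 0$ (i.e. $s$ for $\beta=1$ and $s^4$ for $\beta=4$), which is automatic on the right of (\ref{5.1B}) because $s^2\rho_{(2),0,\beta}^{\rm bulk}(s,0)\sim s^{\beta+2}$; verifying the vanishing of the low-order small-$s$ coefficients before the full comparison isolates any discrepancy at once. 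As in the proof of Proposition \ref{P2.1}, the complete term-by-term identity is most safely confirmed with computer algebra.
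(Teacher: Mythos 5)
Your proposal is correct and takes essentially the same route as the paper, whose entire proof is the statement that (\ref{5.1B}) ``is verified using computer algebra from the explicit functional forms''; your organisation of the hand computation (passing to $y=\pi s$, isolating the ${\rm Si}$-sector via $h(y)=\sin y-y\cos y$, $h''(y)=\sin y+y\cos y$, and using the small-$s$ level-repulsion check) is a sound way to carry that out. One caveat your own built-in check will expose: the displayed $\rho_{(2),1,\beta=4}^{\rm bulk}(X,0)$ in the Proposition behaves like $-(\pi X)^2/24$ rather than ${\rm O}(X^4)$ as $X\to 0$, so that formula contains a typographical error and the $\beta=4$ verification must be run against the correction term as recomputed from the kernel expansion (the $\beta=1$ forms check out exactly, with $c_1=-1/6$).
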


\begin{proof} This is verified using computer algebra from the explicit functional forms. \end{proof}

\subsection{Spacing distributions}
We have from \cite[Cor.~5.12]{BFM17} the $\beta = 1$ spacing distribution expansions analogous to (\ref{2.4}) and (\ref{2.5}) 
\begin{gather}
 \begin{multlined}\label{3.4Z}
\mathcal P_{0,\beta=1}(s;\xi) = \frac{1}{6} \pi ^2  s   -\frac{1}{60} \pi ^4  s^3  -\frac{1}{270} \pi ^4(\xi
   -2)   s^4 +\frac{\pi ^6  s^5  }{1680}  +\frac{\pi ^6
   (\xi -2)   s^6 }{4725}  \\*[2mm] - \frac{\pi ^8  s^7 }{90720} +\frac{\pi ^8  (\xi
   -2) (3 \xi -32) s^8  }{5292000} +\frac{\pi ^{10}  s^9 }{7983360}   + {\rm O}(s^{10}),
  \end{multlined}\\
  \begin{multlined}\label{3.5Z}
  \mathcal P_{1,\beta=1}(s;\xi) = -\frac{1}{6} \pi ^2  s +\frac{1}{18} \pi ^4  s^3 +\frac{1}{54} 
   \pi ^4 (\xi -2)  s^4-\frac{1}{240} \pi ^6   s^5  -\frac{4 \pi ^6
    (\xi -2)  s^6}{2025}\\*[2mm] +\frac{\pi ^8  s^7}{7560}-\frac{\pi ^8
    (\xi -2) \left(3 \xi -32 \right) s^8}{352800}-\frac{\pi^{10}  s^9}{435456} + {\rm O}\left(s^{10}\right).
 \end{multlined}  
 \end{gather}
To the order shown, these spacing distribution generating functions, 
and the gap probability generating functions $ \mathcal E_{0,\beta=1}^{\rm bulk}(s;\xi), 
\mathcal E_{1,\beta=1}^{\rm bulk}(s;\xi)$ too,
are related by the analogues of (\ref{5.1b})
and (\ref{5.1c})
\begin{equation}\label{5.1e}
 \mathcal P_{1,\beta=1}^{\rm bulk}(s;\xi) =
 -{1 \over 6} {d^2 \over ds^2}
 \bigg ( s^2 \mathcal P_{0,\beta=1}^{\rm bulk}(s;\xi) \bigg ), \quad  \mathcal E_{1,\beta=1}^{\rm bulk}(s;\xi) =
 -{s^2 \over 6} {d^2 \over ds^2}
  \mathcal E_{0,\beta=1}^{\rm bulk}(s;\xi),
\end{equation}
the first of which with $\xi = 0$ reduces to
the case $\beta =1$ of (\ref{5.1B}).
In fact these identities (since they are equivalent to each other, it suffices to consider only the second) can be established to be true in general.

For this purpose, we will use that fact that
analogous to (\ref{5.1d}) and (\ref{5.1d1}), the gap probability generating functions in (\ref{5.1e}) admit forms relating to certain integral operators.
As already commented, the circular ensemble for $\beta = 1$ is the circular orthogonal ensemble, which gives rise to a Pfaffian point process. These imply matrix Fredholm determinant forms (specifically the kernel of the corresponding integral operator is a $2 \times 2$ matrix) for the generating functions. We require instead a relation to certain scalar integral operators. This comes about due to the generating function identity
(\cite[Eq.~(8.150) $N$ even case]{Fo10}, \cite[Prop.~5.1 \& Eq.~(5.10)]{BFM17})
  \begin{equation}\label{2.31}
\mathcal  E^{{\rm COE}_N}((0,\phi);\xi) = {(1 - \xi)  \mathcal  E^{O^\nu(N+1)}((0,\phi/2);\hat{\xi})   + \mathcal  E^{O^{-\nu}(N+1)}((0,\phi/2);\hat{\xi})  \over 2 - \xi},
  \end{equation} 
  where $\nu = (-)^N$ and $\hat{\xi} := 2 \xi - \xi^2$. In (\ref{2.31}) the notation $O^+(n)$ and $O^-(n)$ refers to the eigenvalues in $(0,\pi)$ of these classical matrix groups, while COE${}_N$ refers to the $N \times N$ circular orthogonal ensemble.  The eigenvalues of the classical groups orthogonal ensembles form determinantal point processes  with respective kernels (see e.g.~\cite[\S 5.5.2]{Fo10})
  \begin{equation}\label{Kpm}
  K^{N,\mp}(x,y) := K_N^{\rm CUE}(x,y) \mp K_N^{\rm CUE}(x,-y).
 \end{equation} 

 Expanding the bulk scaling of the RHS of (\ref{2.31}) for large $N$ according to (\ref{3.2}) gives
 \begin{gather}
 \begin{multlined}\label{3.6}
\mathcal E_{0,\beta=1}((0,s);\xi) = {1 \over 2 - \xi} \bigg (  (1 - \xi)
     \det(\mathbb I - \bar{\xi} \mathbb K_{s/2}^-)   
      + \det(\mathbb I - \bar{\xi} \mathbb K_{s/2}^+)  
      \bigg ),
     \end{multlined} \\
 \begin{multlined}\label{3.7}
      \mathcal E_{1,\beta=1}((0,s);\xi) = {1 \over 2 - \xi} \bigg (  (1 - \xi)
     \det(\mathbb I - \bar{\xi} \mathbb K_{s/2}^-) {\rm Tr}(( \mathbb I - \xi \mathbb K_{s/2}^-)^{-1}
     \xi \mathbb L_{s/2}^-)  \\
      + \det(\mathbb I - \bar{\xi} \mathbb K_{s/2}^+)  
     {\rm Tr}(( \mathbb I - \xi \mathbb K_{s/2}^+)^{-1}
     \xi \mathbb L_{s/2}^+) \bigg ).
     \end{multlined}
     \end{gather}
 Here $\mathcal K_s^\pm$ and $\mathcal L_s^\pm$ are 
 the integral operators on $(0,s)$   with kernels 
 $$
 K_\infty^{\pm} (X,Y) := K_\infty(X,Y) \pm K_\infty(X,-Y), \qquad
 L_\infty^{\pm} (X,Y) := L_\infty(X,Y) \pm L_\infty(X,-Y),
 $$
 where $K_\infty, L_\infty$ are given by (\ref{2.1a}) and (\ref{KL}) respectively. Let us denote $\det(\mathbb I - {\xi} \mathbb K_{s/2}^\pm)$ by $\mathcal E_{0}^\pm(s;\xi)$ and 
$\det(\mathbb I - {\xi} \mathbb K_{s/2}^\pm) {\rm Tr}(( \mathbb I - \xi \mathbb K_{s/2}^\pm)^{-1}
     \xi \mathbb L_{s/2}^\pm)$ by
  $\mathcal E_{1}^\pm(s;\xi)$. 
  We can establish the following differential identity between these quantities, which is sufficient  for the validity of  (\ref{5.1e}).

  \begin{prop}\label{P3.1x}
      We have
      \begin{equation}\label{5.1f}
   \mathcal E_{1}^{\pm}(s;\xi) =
 -{s^2 \over 6} {d^2 \over ds^2}
  \mathcal E_{0}^{\pm}(s;\xi).
\end{equation}
      \end{prop}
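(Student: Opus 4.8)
The plan is to reduce (\ref{5.1f}) to an elementary identity for a one-parameter family of Fredholm determinants. Rescaling the interval $(0,s/2)$ to the fixed interval $(0,1)$ via $X = a\tilde X$ with $a := s/2$ turns $\mathbb K_{s/2}^\pm$ into the operator $\mathbb K_a^\pm$ on $(0,1)$ with kernel $K_a^\pm(x,y) = \frac{\sin\pi a(x-y)}{\pi(x-y)} \pm \frac{\sin\pi a(x+y)}{\pi(x+y)}$, so that $\mathcal E_0^\pm(s;\xi) = \det(\mathbb I - \xi\mathbb K_a^\pm)$ with the endpoint dependence carried entirely by the parameter $a$. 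The moving-endpoint derivative $\tfrac{d}{ds}$ is thereby traded for an ordinary parameter derivative $\tfrac12\tfrac{d}{da}$ acting through the kernel, to which the standard (trace-class) Fredholm calculus applies.

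Two structural facts then do the work, and I would establish them first. First, $\partial_a\mathbb K_a^\pm$ has rank one: by the sum-to-product identities $\partial_a K_a^+(x,y) = \cos\pi a(x-y) + \cos\pi a(x+y) = 2\cos\pi ax\cos\pi ay$ and $\partial_a K_a^-(x,y) = 2\sin\pi ax\sin\pi ay$. Second, after the same rescaling the correction operator satisfies the exact relation $\mathbb L_a^\pm = -\tfrac{a^2}{6}\partial_a^2\mathbb K_a^\pm$; this is checked by differentiating $\partial_a K_a^\pm$ once more and comparing with the rescaled kernel $aL_\infty^\pm(ax,ay)$ of $\mathbb L_{s/2}^\pm$, using the explicit form (\ref{KL}) of $L_\infty$.

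With these facts the identity closes at once. Writing $R_a := (\mathbb I - \xi\mathbb K_a^\pm)^{-1}$ and $A := \xi R_a\partial_a\mathbb K_a^\pm$, the standard formulae give $(\log\det(\mathbb I - \xi\mathbb K_a^\pm))' = -\tr A$ and $(\log\det(\mathbb I - \xi\mathbb K_a^\pm))'' = -\tr(A^2) - \tr(\xi R_a\partial_a^2\mathbb K_a^\pm)$. Since $A$ is rank one, $\tr(A^2) = (\tr A)^2$, so the term $-\tr(A^2)$ cancels against $((\log\det)')^2$ and one is left with $\tfrac{d^2}{da^2}\det(\mathbb I - \xi\mathbb K_a^\pm) = -\tr(\xi R_a\partial_a^2\mathbb K_a^\pm)\,\det(\mathbb I - \xi\mathbb K_a^\pm)$. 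The first-order coefficient defining $\mathcal E_1^\pm$, namely $-\det(\mathbb I - \xi\mathbb K_a^\pm)\tr(\xi R_a\mathbb L_a^\pm)$, then equals $\tfrac{a^2}{6}\det(\mathbb I - \xi\mathbb K_a^\pm)\tr(\xi R_a\partial_a^2\mathbb K_a^\pm) = -\tfrac{a^2}{6}\tfrac{d^2}{da^2}\det(\mathbb I - \xi\mathbb K_a^\pm)$ by the second structural fact, and converting back through $a = s/2$ reproduces exactly (\ref{5.1f}).

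The real content is not a hard obstacle but the observation that the reduction to scalar operators carried out in (\ref{3.6})--(\ref{3.7}) makes $\partial_a\mathbb K_a^\pm$ rank one --- a feature special to the even and odd sine kernels, and exactly what fails for the full sine kernel of the CUE, where $\partial_s\mathbb K_s$ has rank two. This is why Proposition \ref{P2.1} required the $\sigma$-Painlev\'e input and carried the constant $-\tfrac1{12}$ in place of $-\tfrac16$. An alternative in the style of Proposition \ref{P2.1} would characterise $\mathcal E_0^\pm$ and $\mathcal E_1^\pm$ through $\sigma$-Painlev\'e transcendents for the $O^\pm$ ensembles and verify the relation by computer algebra; the rank-one route is cleaner and self-contained, needing only care with the sign convention in the definition of $\mathcal E_1^\pm$ (it is the coefficient of $N^{-2}$, carrying the minus sign of first-order determinant perturbation, as in (\ref{5.1d1})) and with the factor arising from $a = s/2$.
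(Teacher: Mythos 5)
Your proof is correct, and it takes a genuinely different route from the paper's. The paper's proof of Proposition \ref{P3.1x} invokes the $\sigma$-Painlev\'e III$'$ characterisations of $\mathcal E_0^{\pm}$ and $\mathcal E_1^{\pm}$ from \cite[Prop.~5.10]{BFM17}, reduces (\ref{5.1f}) to a differential identity between the transcendents $f_1^{\mp}$ and $f_0^{\mp}$ analogous to (\ref{Sig3}), and verifies that identity by the computer-algebra procedure of Proposition \ref{P2.1}. Your argument is instead a direct, self-contained piece of Fredholm calculus: after rescaling $(0,s/2)$ to $(0,1)$ so that the endpoint becomes the kernel parameter $a=s/2$, your two structural facts both check out --- $\partial_a K_a^{+}(x,y)=2\cos\pi a x\cos\pi a y$ and $\partial_a K_a^{-}(x,y)=2\sin\pi a x\sin\pi a y$ are rank one, and the rescaled correction kernel satisfies $L_a^{\pm}=-\tfrac{a^2}{6}\partial_a^2 K_a^{\pm}$ exactly --- and the rank-one property gives $\tr(A^2)=(\tr A)^2$, whence $\tfrac{d^2}{da^2}\det(\mathbb I-\xi\mathbb K_a^{\pm})=-\det(\mathbb I-\xi\mathbb K_a^{\pm})\tr\big(\xi(\mathbb I-\xi\mathbb K_a^{\pm})^{-1}\partial_a^2\mathbb K_a^{\pm}\big)$; the factor $4$ from $s=2a$ is absorbed correctly. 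What your route buys is a conceptual explanation of the constant $-1/6$ and of why the even/odd sine kernels are special: for the full sine kernel the $a$-derivative has rank two, the obstruction $(\tr A)^2-\tr(A^2)$ survives, and this is consistent with Proposition \ref{P2.1} needing the Painlev\'e input. What the paper's route buys is uniformity with the machinery already set up for $\beta=2$, which does not rely on any rank-one accident. One point you were right to flag: as printed above Proposition \ref{P3.1x}, $\mathcal E_1^{\pm}$ is defined without the minus sign appearing in (\ref{5.1d1}); your reading of $\mathcal E_1^{\pm}$ as the genuine $N^{-2}$ coefficient, $-\det(\mathbb I-\xi\mathbb K_{s/2}^{\pm})\tr\big((\mathbb I-\xi\mathbb K_{s/2}^{\pm})^{-1}\xi\mathbb L_{s/2}^{\pm}\big)$, is the convention under which (\ref{5.1f}) and its consequences (\ref{5.1e}) and (\ref{5.1g}) hold, and it matches the small-$s$ data (\ref{3.4Z})--(\ref{3.5Z}).
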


      \begin{proof}
As with Proposition \ref{P2.1}, our strategy is to make use of Painlev\'e transcendent characterisations of the quantities in (\ref{5.1f}). Specifically, in \cite[Prop.~5.10]{BFM17} $ \mathcal E_{0,\beta=1}^{\pm}(s;\xi)$ was expressed as  particular $\sigma$PIII$'$ $\tau$-functions, invovling certain transcendents
$f_0^\mp$. The structure of $ \mathcal E_{1,\beta=1}^{\pm}(s;\xi)$ was shown to be analogous to (\ref{KSa}), and thus involving transcendents $f_1^\mp$ specified by the solution of a second order linear differential equation analogous to (\ref{Sig2}) subject to certain boundary conditions. The identity (\ref{5.1f}) is then seen to be equivalent to an identity analogous to (\ref{Sig3}) expressing $f_1^\mp$ in terms of $f_0^\mp$. This in term can be established following the procedure used to establish (\ref{Sig3}).
      \end{proof}

    The identity (\ref{5.1f}) also has relevance to the $\beta = 4$ analogue of  (\ref{5.1b}),
 (\ref{5.1c}) and (\ref{5.1e}).

 \begin{prop}
      We have
      \begin{equation}\label{5.1g}
 \mathcal P_{1,\beta=4}^{\rm bulk}(s;\xi) =
 -{1 \over 24} {d^2 \over ds^2}
 \bigg ( s^2 \mathcal P_{0,\beta=4}^{\rm bulk}(s;\xi) \bigg ), \quad  \mathcal E_{1,\beta=4}^{\rm bulk}(s;\xi) =
 -{s^2 \over 24} {d^2 \over ds^2}
  \mathcal E_{0,\beta=4}^{\rm bulk}(s;\xi).
\end{equation}
      \end{prop}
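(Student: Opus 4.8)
The plan is to reduce the $\beta=4$ identity (\ref{5.1g}) to the already-established Proposition \ref{P3.1x}, exactly as the $\beta=1$ identity (\ref{5.1e}) was obtained from it. The mechanism in the $\beta=1$ case was the classical inter-relation (\ref{2.31}), which expresses the COE generating function as a linear combination of orthogonal-group-ensemble generating functions; upon bulk scaling and use of (\ref{3.2}) this produced the representations (\ref{3.6}) and (\ref{3.7}) of $\mathcal E_{0,\beta=1}$ and $\mathcal E_{1,\beta=1}$ as one and the same linear combination of the scalar objects $\mathcal E_0^\pm(s;\xi)$ and $\mathcal E_1^\pm(s;\xi)$ of Proposition \ref{P3.1x}. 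My first step would be to write down the analogous classical inter-relation for the CSE (available via \cite{Me04}), of the schematic shape
\[
\mathcal E^{{\rm CSE}_N}((0,\phi);\xi) = a(\xi)\,\mathcal E^{O^{+}(2N+1)}((0,\phi/2);\hat\xi) + b(\xi)\,\mathcal E^{O^{-}(2N+1)}((0,\phi/2);\hat\xi),
\]
the essential feature being that the underlying orthogonal-group ensembles carry an index of order $2N$ (compare the entries $S_{2N},D_{2N},I_{2N}$ appearing in the CSE kernel (\ref{K14})), rather than the index of order $N$ of the $\beta=1$ case.

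Bulk scaling this identity and expanding each orthogonal-group factor via (\ref{3.2}) then gives, at leading order, $\mathcal E_{0,\beta=4}(s;\xi)$ as the combination $a\,\mathcal E_0^{+}+b\,\mathcal E_0^{-}$ of the scalar determinants of Proposition \ref{P3.1x}. The crucial bookkeeping is at the correction order: because these kernels are of index $2N$, their bulk expansion proceeds in powers of $1/(2N)^2=\tfrac14 N^{-2}$, so the coefficient of $N^{-2}$ in the CSE expansion is $\tfrac14$ times the corresponding $\mathcal E_1^\pm$ combination, i.e.
\[
\mathcal E_{1,\beta=4}(s;\xi) = \tfrac14\Big( a(\xi)\,\mathcal E_1^{+}(s;\xi) + b(\xi)\,\mathcal E_1^{-}(s;\xi)\Big).
\]
This factor of $\tfrac14$ is precisely what converts the coefficient $-1/6$ of Proposition \ref{P3.1x} into the required $-1/24$, and is the single point distinguishing the $\beta=4$ statement from (\ref{5.1e}).

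With these two representations in hand I would apply the identity (\ref{5.1f}), $\mathcal E_1^\pm = -\tfrac{s^2}{6}\tfrac{d^2}{ds^2}\mathcal E_0^\pm$, to each $\pm$ piece. Since $s^2\,d^2/ds^2$ is an Euler operator, hence invariant under any constant rescaling $s\mapsto\lambda s$ (so the factor-of-$2$ discrepancy between the CSE bulk variable and the argument fed into the $\pm$ determinants is immaterial), linearity gives at once
\[
\mathcal E_{1,\beta=4}(s;\xi) = \tfrac14\Big(-\tfrac{s^2}{6}\tfrac{d^2}{ds^2}\Big)\big(a\,\mathcal E_0^{+}+b\,\mathcal E_0^{-}\big) = -\tfrac{s^2}{24}\tfrac{d^2}{ds^2}\mathcal E_{0,\beta=4}(s;\xi),
\]
which is the second identity in (\ref{5.1g}). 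The first identity in (\ref{5.1g}), for $\mathcal P_{1,\beta=4}$, then follows from the $\mathcal E$-form via the relation (\ref{4.2d}) between $\mathcal P$ and $\mathcal E$, by the same manipulation establishing the equivalence of (\ref{5.1b}) and (\ref{5.1c}) in the proof of Proposition \ref{P2.1}.

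The main obstacle I anticipate is not the differential algebra, which is immediate once the two representations are set up, but rather fixing the finite-$N$ CSE inter-relation and its parameter bookkeeping correctly: one must confirm the precise weights $a(\xi),b(\xi)$ and the $\xi\mapsto\hat\xi$ map, and above all verify that the relevant index is $2N$, since it is exactly this doubling that produces the factor $\tfrac14$ and hence the coefficient $-1/24$ rather than $-1/6$. I note that the argument is robust to the exact form of $a(\xi),b(\xi)$, since the same weights occur at both orders of the expansion and the Euler operator commutes through the linear combination.
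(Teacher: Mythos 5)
Your proposal is correct and follows essentially the same route as the paper: the paper likewise reduces to the $\mathcal E$-form, invokes the inter-relation $\mathcal E^{{\rm CSE}_N}((0,\phi);\xi) = \tfrac12\big(\mathcal E^{O^+(2N+1)}((0,\phi/2);\xi)+\mathcal E^{O^-(2N+1)}((0,\phi/2);\xi)\big)$ (with equal weights $\tfrac12$ and no $\xi\mapsto\hat\xi$ map, the one piece of bookkeeping you left open), and obtains $\mathcal E_{1,\beta=4} = \tfrac18(\mathcal E_1^-+\mathcal E_1^+)$ precisely because the orthogonal-group dimension is $2N+1$ rather than $N+1$, which is your factor of $\tfrac14$. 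Applying Proposition \ref{P3.1x} then gives the coefficient $-1/24$ exactly as you describe.
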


      \begin{proof}
          First, as already commented in relation to (\ref{5.1b}) and
 (\ref{5.1c}), the two identities are equivalent, so it suffices to consider only the second of these. In the case of the CSE (i.e.~the circular $\beta$ ensemble with $\beta = 4$ we have that \cite[Eq.~(5.15)]{Fo10}
 (for context see the recent review \cite[\S 3.3]{Fo24x})
   \begin{equation}\label{2.31z}
\mathcal  E^{{\rm CSE}_N}((0,\phi);\xi) = {1 \over 2} \Big ( \mathcal  E^{O^+(2N+1)}((0,\phi/2);{\xi})   + \mathcal  E^{O^{-}(2N+1)}((0,\phi/2);{\xi}) \Big ).
  \end{equation} 
  Consequently
  \begin{equation}
      \mathcal E_{0,\beta = 4}((0,s);\xi) = {1 \over 2} \Big ( \mathcal  E^{-}_0(s;{\xi})   + \mathcal  E^{+}_0(s;{\xi}) \Big ), \quad
 \mathcal E_{1,\beta = 4}((0,s);\xi) = {1 \over 8} \Big ( \mathcal  E^{-}_1(s;{\xi})   + \mathcal  E^{+}_1(s;{\xi}) \Big ),    
  \end{equation}
 where use has been made of the notation introduced above Proposition \ref{P3.1x}; cf.~the rewrite of (\ref{3.6}) and (\ref{3.7}) using this notation --- note in particular a further factor of $1/4$ appearing on the RHS of $\mathcal E_{1,\beta = 4}((0,s);\xi)$ relative to $\mathcal E_{1,\beta = 1}((0,s);\xi)$, which is traced back to the occurrence of the dimensions of the orthogonal groups on the RHS of (\ref{2.31z}) being $2N+1$ rather than $N+1$ as in (\ref{2.31}).
      \end{proof}

\subsection{The structure function expanded for large $N$}
The structure function (also known as the spectral form factor) $S_{N,\beta}(k)$ is defined in terms of the Fourier coefficients of the truncated two-point correlation $\rho_{(2),N}^T(0,\theta;\beta) = \rho_{(2),N}(0,\theta;\beta) -  ( N / 2 \pi )^2$ according to
 \begin{equation}\label{S1q} 
 S_{N,\beta}(k) = \int_0^{2 \pi} e^{i k \theta} \rho_{(2),N}^T(0,\theta;\beta) \, d \theta + {N \over 2\pi}, \quad k \ge N,
 \end{equation}
 where the additive constant terms ensures that $S_{N,\beta}(0)=0$.
 In the case $\beta = 2$ if follows from (\ref{1.3}) and (\ref{1.4}) that
 \begin{equation}\label{S2q} 
 \rho_{(2),N}^T(0,\theta;\beta) \Big |_{\beta = 2}  = - \Big ( {1 \over 2 \pi} \Big )^2 
 \sum_{p=-N}^{N} (N - |p|) e^{i p \theta}.
 \end{equation}
 Consequently we obtain the now classical (see e.g.~\cite{Ha00}) exact evaluation
 \begin{equation}\label{S3q} 
 S_{N,\beta}(k) \Big |_{\beta = 2} =
 \begin{cases} |k|/2 \pi, & |k| < N \\
 N/2\pi, & |k| \ge N.
 \end{cases}
\end{equation}

Consider next the bulk scaling of the structure function
 \begin{equation}\label{S4q}
 \tilde{S}_{N,\beta}(\tau) := {2 \pi \over N} S_{N,\beta}( \tau N) =
   \int_{-N/2}^{N/2} e^{2 \pi i x \tau } \bigg (
  \Big ({2 \pi \over N} \Big )^2 \rho_{(2),N}^T(0,2 \pi x/N;\beta) \bigg ) \, d x + 1.
\end{equation}  
Recalling (\ref{2.1}) we see that (\ref{S4q}) is, up to the additive constant, the Fourier transform in the Fourier variable $2 \pi \tau$ of the truncated two-point correlation in bulk scaling variables. To obtain the large $N$ form for $\beta =2$, one may trial making use of (\ref{3.3}). However doing so leads to an ill-defined integral at order $1/N^2$. Rather, from the exact result (\ref{S3q}) we see that
\begin{equation}\label{S5q} 
 \tilde{S}_{N,\beta}(\tau) \Big |_{\beta = 2} =
 \begin{cases}  |\tau|, & |\tau| < 1 \\
 1, & |\tau| \ge 1,
 \end{cases}
\end{equation}
and thus as the RHS is independent of $N$ all terms in the corresponding large $N$ expansion in fact vanish except the leading term.

The explicit leading corrections for $\beta=1$ and 4 of Proposition \ref{P3.1}
show that it is similarly not possible to obtain meaningful large $N$ expansions of the bulk scaled structure function by substituting in (\ref{S4q}) the large $N$ expansion of the bulk scaled two-point correlation. Instead one should use knowledge of the exact form of the structure function for these $\beta$ values at finite $N$ \cite{H+96,WF15},
 \begin{equation}\label{S6q} 
2 \pi  S_{N,\beta}(k) \Big |_{\beta = 1} =
 \begin{cases} 2|k| - |k| \Big ( \psi(|k| + (N+1)/2) - \psi((N+1)/2) \Big ), & |k| < N \\ 2 N -
 |k| \Big ( \psi(|k| + (N+1)/2) - \psi(|k| + (-N+1)/2) \Big ), & |k| \ge N,
 \end{cases}
\end{equation}
and
 \begin{equation}\label{S7q} 
2 \pi  S_{N,\beta}(k) \Big |_{\beta = 4} =
 \begin{cases} {|k|\over 2} \Big (1+ {1 \over 2} ( \psi(N+1/2) - \psi(-N+|k|+1/2)) \Big ), & |k| < 2N-1 \\
 N , & |k| \ge 2 N - 1,
 \end{cases}
\end{equation}
where $\psi(z)$ denotes the digamma function.

\begin{prop}\label{P3.1}
    We have that for large $N$
     \begin{equation}\label{S8q} 
  \tilde{S}_{N,\beta}(\tau) \Big |_{\beta = 1} =
 \begin{cases} \displaystyle 2 |\tau| - |\tau| \log (1 + 2 | \tau|) - {|\tau| \over 6 N^2} 
 \Big (  1 - {1 \over  (1 + 2 |\tau|)^2}\Big ) + \cdots 
 & |\tau| \le  1 \\[.3cm]
 \displaystyle 2 -  |\tau|  \log {2 | \tau| + 1 \over 2 | \tau | -1}
 + {|\tau| \over N^2} {4 |\tau| \over 3 (1 - (2 |\tau|)^2)^2} 
+ \cdots & |\tau| \ge  1,
 \end{cases}
\end{equation}
 and
  \begin{equation}\label{S9q} 
  \tilde{S}_{N,\beta}(\tau) \Big |_{\beta = 4} =
 \begin{cases}  \displaystyle  {|\tau| \over 2} - {|\tau| \over 4} \log |1 - |\tau| | +
 {|\tau| \over 96 N^2} \Big ( 1 - {1 \over (|\tau| - 1)^2} \Big ) + \cdots
 & |\tau| \le  2 \\[.2cm]
 1, &   |\tau| \ge  2.
 \end{cases}
 \end{equation}
 Furthermore, in both cases the terms not shown are a series in  $N^{-2}$, starting at order $N^{-4}$.
\end{prop}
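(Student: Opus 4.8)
The plan is to expand the exact finite-$N$ evaluations (\ref{S6q}) and (\ref{S7q}) directly, substituting $k=\tau N$ as dictated by the bulk scaling (\ref{S4q}) and extracting the large-$N$ behaviour from the classical asymptotic expansion of the digamma function,
\begin{equation*}
\psi(z)\sim\log z-\frac{1}{2z}-\sum_{n\ge 1}\frac{B_{2n}}{2n\,z^{2n}}=\log z-\frac{1}{2z}-\frac{1}{12z^{2}}+\cdots,\qquad z\to\infty,
\end{equation*}
with $B_{2n}$ the Bernoulli numbers. The organising observation, valid for every branch, is that each digamma difference occurring in (\ref{S6q})--(\ref{S7q}) has the form $\psi(z_{+})-\psi(z_{-})$ with $z_{\pm}=\alpha_{\pm}N+\tfrac12$ linear in $N$ and carrying the common constant offset $\tfrac12$, the two arguments differing by an integer, namely $z_{+}-z_{-}=|k|$ in the first line of (\ref{S6q}), $z_{+}-z_{-}=N$ in its second line, and $z_{+}-z_{-}=2N-|k|$ in (\ref{S7q}). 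For $\beta=4$ and $|\tau|\ge 2$ the two arguments in (\ref{S7q}) coincide and $\tilde S_{N,4}(\tau)=1$ identically, so nothing is to be proved there.

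First I would settle the final assertion, that the expansion runs in powers of $N^{-2}$ alone; this is the structure-function analogue of the evenness in $N$ of the correlation functions noted in the text preceding (\ref{4.1B}). Regarding each $\alpha_{\pm}N+\tfrac12$ as a function of a continuous variable $N$, one has $z_{\pm}(-N)=1-z_{\pm}(N)$, so the reflection formula $\psi(1-z)=\psi(z)+\pi\cot\pi z$ gives
\begin{equation*}
\big(\psi(z_{+}(-N))-\psi(z_{-}(-N))\big)-\big(\psi(z_{+}(N))-\psi(z_{-}(N))\big)=\pi\big(\cot\pi z_{+}(N)-\cot\pi z_{-}(N)\big).
\end{equation*}
Because $z_{+}-z_{-}\in\Z$ the two cotangents are equal and the right-hand side vanishes, so each digamma difference, and hence $\tilde S_{N,\beta}(\tau)$, is even in $N$; its asymptotic power series therefore contains only even powers of $1/N$, with the first correction necessarily at order $N^{-2}$ and the unexhibited terms at order $N^{-4}$ and beyond. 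When $\tau N\notin\Z$ the cotangent difference is merely oscillatory and contributes nothing to the $1/N$ power series, so the conclusion is unaffected.

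It then remains to read off the first two coefficients. Inserting the digamma expansion, the logarithmic parts give $\psi(z_{+})-\psi(z_{-})\to\log(\alpha_{+}/\alpha_{-})$, which reproduces the limiting forms displayed in (\ref{S8q}) and (\ref{S9q}): $\log(1+2|\tau|)$ and $\log\frac{2|\tau|+1}{2|\tau|-1}$ on the two $\beta=1$ ranges, and $-\log|1-|\tau||$ for $\beta=4$. The $O(1/N)$ contributions, coming jointly from the subleading term of $\log(\alpha_{+}N+\tfrac12)-\log(\alpha_{-}N+\tfrac12)$ and from the $-\tfrac1{2z}$ terms, cancel --- as they must by the evenness just proved --- which serves as a useful check. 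Collecting the $1/N^{2}$ contributions from the second-order expansion of the logarithms, the first-order expansion of $-\tfrac1{2z}$, and the leading $-\tfrac1{12z^{2}}$ term, and multiplying by the relevant prefactor ($-|\tau|$ for $\beta=1$ and $\tfrac{|\tau|}{4}$ for $\beta=4$), yields after simplification the explicit corrections in (\ref{S8q}) and (\ref{S9q}); this last algebraic step is routine and is most conveniently confirmed with computer algebra.

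The genuine subtlety, more than any single estimate, is to carry this out uniformly and to treat the arguments of non-standard sign. For $\beta=4$ with $|\tau|<1$ the argument $z_{-}=(|\tau|-1)N+\tfrac12$ is a large negative half-integer; here one applies the reflection formula a second time to rewrite $\psi(z_{-})$ via $\psi(1-z_{-})$ with $1-z_{-}\sim(1-|\tau|)N$ large and positive, the accompanying $\cot\pi z_{-}$ vanishing because $z_{-}$ is a half-integer, after which the asymptotic expansion applies cleanly and the absolute value in $\log|1-|\tau||$ appears automatically. One should also record that the $\beta=4$ expansion is uniform only for $\tau$ bounded away from $|\tau|=1$, where $z_{-}$ degenerates to order unity and the limiting form acquires its logarithmic singularity, marking a separate transitional regime; for $\beta=1$ the point $|\tau|=1$ is by contrast merely the junction of the two branches of (\ref{S6q}), at which the two expansions match continuously.
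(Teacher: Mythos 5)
Your route to the displayed expansions is the same as the paper's: substitute $k=\tau N$ in the exact forms (\ref{S6q})--(\ref{S7q}), apply the Stirling series for $\psi$, and in the one branch where an argument is large and negative ($\beta=4$, $|\tau|<1$) first use the reflection formula at half-integer arguments to replace $\psi(-N+|k|+\tfrac12)$ by $\psi(N-|k|+\tfrac12)$. That part, including the observation that the $O(1/N)$ terms cancel, is in order.

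The gap is in your argument for the ``furthermore'' clause. From $z_{\pm}(-N)=1-z_{\pm}(N)$ and the reflection formula you conclude that each digamma difference is even in $N$ and that ``its asymptotic power series therefore contains only even powers of $1/N$.'' That inference is invalid: an even function can have an asymptotic power series at $+\infty$ with nonzero odd coefficients --- e.g.\ $f(N)=\tanh(N)/N$ is even yet $f(N)\sim N^{-1}$ as $N\to+\infty$. The reason is that the Stirling expansion of $\psi(z)$ is valid only for $|\arg z|<\pi$, so the expansion of your $F$ at $-\infty$ is not obtained by formally negating $N$ in its expansion at $+\infty$, and evenness of the function says nothing about the parity of the series. (A secondary defect: for continuous $N$ the cotangent difference does not vanish identically, so $F(-N)=F(N)$ holds only on the discrete set where $z_+-z_-\in\Z$; dismissing the cotangents as ``oscillatory, contributing nothing to the power series'' is an assertion, not an argument.) The paper sidesteps all of this by removing the half-integer offset exactly, via $\psi(n+\tfrac12)=-\gamma-2\log 2+2H_{2n}-H_n$, so that each remaining argument is strictly proportional to $N$, and then using $H_n\sim\log n+\gamma+\tfrac1{2n}-\sum_{k\ge1}B_{2k}/(2kn^{2k})$: the $1/(2n)$ terms cancel in $2H_{2n}-H_n$ and only even powers survive, term by term. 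Your reflection idea can be repaired, but only by lifting it to the level of the formal Stirling series $A(z)=\log z-\tfrac1{2z}-\sum_{n\ge1}B_{2n}/(2nz^{2n})$, which satisfies $A(z+1)=A(z)+1/z$ and hence $A(1-z)-A(z)=\mathrm{const}$ as formal series, making $A(z_+)-A(z_-)$ formally invariant under $N\mapsto-N$; that is a statement about the series rather than the function, and it is the statement you actually need.
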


\begin{proof}
In all cases except $\beta = 4$ with $|\tau| \le 1$, the stated expansions follow from (\ref{S6q}) and (\ref{S7q}) together with the large $z$ asymptotic expansion of the digamma function \cite{WiDigamma}
\begin{equation}\label{W1}
\psi(z) \sim \log z - {1 \over 2 z} - {1 \over 12 z^2} + \cdots.
 \end{equation}
 In the case $\beta = 4$ with $|\tau| \le 1$, we must first rewrite
 $ \psi(-N+|k|+1/2)$ using the relection formula for the digamma function,
 \begin{equation}\label{W2}
  \psi(-N+|k|+1/2) = \psi(N+1/2 - |k|),
  \end{equation} 
before making use of (\ref{W1}).

To deduce that all the terms in the large $N$ asymptotic expansion are in powers of $N^{-2}$, we first eliminate the shift by $1/2$ in the arguments of the digamma functions in (\ref{S6q}) and (\ref{S7q}) by using the identity \cite{WiHarmonic}
\begin{equation}\label{W3}
\Psi(n+1/2) = - \gamma - 2 \log 2 + 2H_{2n} - H_{n},
\end{equation} 
where $\gamma$ denotes Euler's constant and $H_n$ denotes the harmonic numbers. Use of the large $n$ asymptotic expansion \cite{WiHarmonic} 
 \begin{equation}\label{W4}
 H_n \sim \log n + \gamma + {1 \over 2n} - \sum_{k=1}^\infty {B_{2k} \over 2 k n^{2k}}
 \end{equation} 
 then establishes the result.
    \end{proof}

    According to Proposition \ref{P3.1}, for $\beta = 1$ and 4 at least the bulk scaled structure function has the large $N$ asymptotic expansion
   \begin{equation}\label{X5} 
   \tilde{S}_{N,\beta}(\tau) \sim
  \tilde{S}_{0,\infty,\beta}(\tau) + {1 \over N^2}  \tilde{S}_{1,\infty,\beta}(\tau) + {1 \over N^4}  \tilde{S}_{2,\infty,\beta}(\tau) + \cdots 
  \end{equation} 
  where the higher order terms are a series in $1/N^2$ starting at order $1/N^6$. From the explicit expansions (\ref{S8q}) and (\ref{S9q}) (further extended to order $1/N^4$), we observe the differential relations
  for $\beta =1,4$
  (where it has been assumed that $\tau > 0$, which then removes the absolute value signs in the expressions of Proposition \ref{P3.1})
  \begin{align}\label{X6}  
& \tilde{S}_{1,\infty,\beta}(\tau) = c_\beta 
 \tau^2 {d^2 \over d \tau^2} \tilde{S}_{0,\infty,\beta}(\tau), \nonumber \\
& \tilde{S}_{2,\infty,\beta}(\tau) = d_\beta
 \bigg ( \tau^4 {d^4 \over d \tau^4} \tilde{S}_{0,\infty,\beta}(\tau) + 8
\tau^3 {d^3 \over d \tau^3} \tilde{S}_{0,\infty,\beta}(\tau) + 12
\tau^2 {d^2 \over d \tau^2} \tilde{S}_{0,\infty,\beta}(\tau) \bigg ),
  \end{align} 
  with $c_1 = -{1 \over 6}, c_4 = - {1 \over 24}$, $d_1 =  {7 \over 360}, d_4 = {7 \over 5760}$. 

  For small $\tau$ and general $\beta > 0$ we have available from \cite[Corollary 4.1]{WF15} the small $|\tau|$ expansion of the terms on the RHS of (\ref{X5}) up to and including ${\rm O}(\tau^{11})$ for $ \tilde{S}_{0,\infty,\beta}(\tau)$
  (the final order term was obtained in the subsequent work \cite{Fo21}), to order $\tau^8$ for $ \tilde{S}_{1,\infty,\beta}(\tau)$, and to order $\tau^6$ for $ \tilde{S}_{2,\infty,\beta}(\tau)$. For present purposes we will be content with presenting these expansions to smaller order than what is available. For this, we define the polynomials
  \begin{align*}
  &p_2(\kappa) = 1 - {11 \kappa \over 6} + \kappa^2, \quad p_4(\kappa) = 1 - {91 \kappa \over 30} + {62 \kappa^2 \over 15} - {91 \kappa^3 \over 30} + \kappa^4, \\
  &q_2(\kappa) = 1 - {3 \kappa \over 2} + \kappa^2, \quad q_4(\kappa) = 1 - {37 \kappa \over 15} + {13 \kappa^2 \over 4} - {37 \kappa^3 \over 15} + \kappa^4, \\
  &r_2(\kappa) = 1 + {15 \kappa \over 8} + \kappa^2, \quad r_4(\kappa) = 1 + {31 \kappa \over 42} - {116 \kappa^2 \over 42} + {31 \kappa^3 \over 42} + \kappa^4, \\
  & p_6(\kappa) = 1 - {1607 \kappa \over 420} + {2011 \kappa^2 \over 280} - {911 \kappa^3 \over 105} + {2011 \kappa^4 \over 280}
  - {1607 \kappa^5 \over 420}
  \kappa^6, \\
  & q_6(\kappa) = 1 - {263 \kappa \over 84} + {1697 \kappa^2 \over 315} - {6337 \kappa^3 \over 1008} + {1697 \kappa^4 \over 315}
  - {263 \kappa^5 \over 84}
  \kappa^6.
\end{align*}
Here $\kappa := \beta/2$. In terms of these polynomials
one has from \cite{WF15} the small $\tau$ expansions
\begin{gather}
 \begin{multlined}\label{2.4X}
\tilde{S}_{0,\infty,\beta}(\tau) = 
{1 \over \kappa} \tau + {(\kappa - 1) \over \kappa^2} \tau^2 + {(\kappa - 1)^2 \over \kappa^3} \tau^3 + 
{(\kappa - 1) \over \kappa^4} p_2(\kappa) \tau^4 + {(\kappa - 1)^2 \over \kappa^5} q_2(\kappa) \tau^5 \\
+ {(\kappa - 1) \over \kappa^6} p_4(\kappa) \tau^6 + {(\kappa - 1)^2 \over \kappa^7} q_4(\kappa) \tau^7 + {(\kappa - 1) \over \kappa^8} p_4(\kappa) \tau^8 + {\rm O}(\tau^9),
\end{multlined}\\
\begin{multlined} \label{2.4Y}
\tilde{S}_{1,\infty,\beta}(\tau) =
-{(\kappa - 1) \over 6 \kappa^3} \tau^2 - 
     {(\kappa - 1)^2 \over 2 \kappa^4} \tau^3 -  {(\kappa - 1) \over  \kappa^5} p_2(\kappa) \tau^4 - {5(\kappa - 1)^2 \over 3 \kappa^6} q_2(\kappa) \tau^5 \\
     -  {5(\kappa - 1) \over  2\kappa^7} p_4(\kappa) \tau^6 + {\rm O}(\tau^7),
 \end{multlined}\\    
\begin{multlined} \label{2.4Z}
\tilde{S}_{2,\infty,\beta}(\tau) =
{(\kappa - 1) \over 30 \kappa^5} (1 + \kappa + \kappa^2) \tau^2 + 
     {2 (\kappa - 1)^2 \over 15 \kappa^6} r_2(\kappa) \tau^3 +  {7(\kappa - 1) \over  20 \kappa^7} r_4(\kappa) \tau^4  +  {\rm O}(\tau^5).
   \end{multlined}
    \end{gather}
These general $\beta$ expansions (\ref{2.4X}) and (\ref{2.4Y}) are  seen to be consistent with the first differential relation in (\ref{X6}) with $c_\beta = - {1 \over 12 \kappa}$, while (\ref{2.4X}) and (\ref{2.4Z}) are consistent with the second differential relation with $d_\beta = (\kappa^3 - 1)/(720 \kappa^3 (\kappa - 1))$.

    \begin{remark} ${}$ \\
    1.~The limiting functional forms of $ \tilde{S}_{N,\beta}(\tau)$ are classical results in random matrix theory (see \cite[Eq.~(5.6) for $\beta = 1$]{Dy62a} and \cite[equivalent to Eq.~(7.2.46) for $\beta = 4$]{Me04}). \\
    2.~By inspection the expansions (\ref{2.4X})--(\ref{2.4Z}) are unchanged (up to a minus sign) by the mappings $\kappa \mapsto 1/\kappa$ and $\tau \mapsto - \tau/\kappa$. This 
    is consistent with the known functional equation \cite[Prop.~4.7]{WF15}. Also, previously the polynomials of $\kappa$
 appearing in (\ref{2.4X}) have been observed to have all their zeros on the unit circle in the complex $\kappa$ plane, and which display an interlacing property \cite{WF14}, \cite{FS23}. This is similarly true of the polynomials in  (\ref{2.4Y}) and (\ref{2.4Z}).
 \end{remark}

\section{Bulk scaling expansion of the correlations for even $\beta$}\label{S3}
\subsection{Hypergeometric functions based on Jack polynomials}
In the cases $\beta = 1,2$ and 4 the general $k$-point correlation function for the circular $\beta$ ensemble are fully determined by a certain scalar kernel function (\ref{3.1}) for $\beta = 2$, and matrix kernel function (\ref{K14}) for $\beta = 1$ and 4. Generally, from (\ref{1.1A}) and the $\beta$ version of (\ref{1.2}), the cases of $\beta$ even have the special property that the $k$-point correlation function is a Laurent polynomial in $\{e^{i \theta_j} \}_{j=1,\dots,k}$. In the work \cite{Fo92j} this polynomial was identified in terms of a (then) recently introduced class of generalised hypergeometric functions based on Jack polynomials \cite{Ka93}. To define these hypergeometric functions we must then first introduce Jack polynomials  \cite[Ch.~VI.6]{Ma95}, \cite[Ch.~12]{Fo10}, \cite[Ch.~7]{KK09}.

 Jack polynomials $P_\kappa^{(\alpha)}(\mathbf x)$ are symmetric polynomials of $N$ variables
 $\mathbf x = (x_1,\dots,x_N)$. They are labelled by a partition $\kappa :=
 (\kappa_1,\dots,\kappa_N)$ with parts which are non-negative
 integers ordered as $\kappa_1 \ge \kappa_2 \ge \cdots \ge \kappa_N$,
 and dependent on a parameter $\alpha > 0$. 
They have the triangular structure with respect to the monomial basis  $\{ m_\mu(\mathbf x) \}$ 
\begin{equation}\label{4.0a}
P_\kappa^{(\alpha)}(\mathbf x) = m_\kappa(\mathbf x) +
\sum_{\mu < \kappa} c_{\kappa, \mu}^{(\alpha)} 
m_\mu (\mathbf x).
\end{equation}
In (\ref{4.0a})  the 
notation $\mu < \kappa$ denotes the partial order on partitions, defined by the requirement that 
 for each $s=1,\dots,\ell(\kappa)$ (where $\ell(\kappa)$  denotes the number of nonzero parts of $\kappa$)
 we have
$\sum_{i=1}^s \mu_i \le \sum_{i=1}^s \kappa_i$. The Jack polynomial $P_\kappa^{(\alpha)}(\mathbf x)$ can be specified as they unique
 polynomial eigenfunctions of the
differential operator
\begin{equation}\label{4.0}
\sum_{j=1}^N
x_j^2 {\partial^2 \over \partial x_j^2}  
+ {2 \over \alpha}  \sum_{1 \le j < k \le N}{1 \over x_j -
x_k}
\Big (x_j^2 {\partial \over \partial x_j} - x_k^2 {\partial
\over
\partial x_k}
\Big )
\end{equation}
with the structure (\ref{4.0a}).

The special case $\alpha =2$ of the Jack polynomials correspond (up to normalisation) to the zonal polynomials of mathematical statistics; see e.g.~\cite[Ch.~7]{Mu82}. In this case, generalised hypergeometric functions 
of the type of relevance to the circular $\beta$ ensemble were introduced in a work of Herz \cite{He55}, albeit entirely in the context of matrix integrals, and without reference to zonal polynomials. A series definition involving zonal polynomials came in the later work of Constantine \cite{Co63}. Generalising this, Yan \cite{Ya92} and Kaneko
\cite{Ka93} introduced the
family of hypergeometric functions based on Jack polynomials as 
\begin{equation}\label{3.40}
  {\vphantom{F}}_p^{\mathstrut} F_q^{(\alpha)}(a_1,\dots,a_p;b_1,\dots,b_q; \mathbf x):=\sum_\kappa  {\alpha^{| \kappa |} \over h_\kappa'} 
 \frac{[a_1]^{(\alpha)}_\kappa\dots [a_p]^{(\alpha)}_\kappa }{[b_1]^{(\alpha)}_\kappa
\dots [b_q]^{(\alpha)}_\kappa} 
P_\kappa^{(\alpha)}(\mathbf x). 
\end{equation}
Here,
with $\mathbf x = \{x_i\}_{i=1}^m$, the sum is over all partitions $\kappa_1 \ge \kappa_2 \ge \cdots \ge \kappa_m \ge 0$, and conventionally the sum is
performed in order of increasing 
$|\kappa|$. Also use has been made of
  the generalised Pochhammer symbol
  \begin{equation}\label{4.3m}
\quad[u]_\kappa^{(\alpha)} := \prod_{l=1}^N {\Gamma(u - (j-1)/\alpha + \kappa_l) \over \Gamma(u - (j - 1)/\alpha)};
\end{equation}
see e.g.~\cite[Eq.~(12.46)]{Fo10}. For the precise definition of $h_\kappa'$ (which also depends on $\alpha$) we reference e.g.~\cite[Eq.~(3.4)]{Fo24}. Recent works making essential use of this class of hypergeometric function in contexts relating to random matrix theory include \cite{FW21,Fo22,No24,Wi24,LWJZ25}.

\subsection{Evenness of the correlation functions in $N$}
Deduced in \cite{Fo92j}, and conveniently summarised in \cite[Prop.~13.2.1]{Fo10}, one has that the $n$-point correlation function for circular $\beta$ ensemble with even $\beta$ 
 in bulk scaling variables
 $$
 \Big ( {2 \pi \over N} \Big )^n
 \rho_{(n)}\left (2 \pi r_1/N, \ldots, 2 \pi r_n/N \right) =: 
\tilde{\rho}_{(n),\beta}\left(r_1, \ldots, r_n\right) 
 $$
 has the evaluation in terms of the generalized hypergeometric function 
\begin{multline}\label{4.5}
\tilde{\rho}_{(n),\beta}\left(r_1, \ldots, r_n\right)= \frac{(N-n+1)_n((\beta/2)!)^N}{N^n\Gamma\left(\frac{\beta N}{2}+1\right)}\prod_{1\leq j<k\leq n}|e^{2\pi ir_k/N}-e^{2\pi ir_j/N}|^\beta \\
     \times M_{N-n}(n \beta / 2, n \beta / 2, \beta / 2) \prod_{k=2}^{n}e^{2\pi i\beta(r_k-r_1)(N-n)/N} {\vphantom{F}}_2^{\mathstrut} F_1^{(\beta/2)} \left(-N+n, n ; 2 n ; 1-t_1, \ldots, 1-t_{(n-1) \beta}\right),
\end{multline}
where 
\begin{align*}
    t_k:=e^{-2 \pi i\left(r_j-r_1\right) / L}, \quad k=1+(j-2) \beta, \ldots,(j-1) \beta, \quad (j=2, \ldots, n),
\end{align*}
and $M_{N-n}$ is the Morris integral (see e.g.~\cite[Eq.~(4.4)]{Fo10})
\begin{align}
    \begin{aligned}
M_N(a, b, \lambda) & :=\int_{-1 / 2}^{1 / 2} d \theta_1 \cdots \int_{-1 / 2}^{1 / 2} d \theta_N \prod_{l=1}^N e^{\pi i \theta_l(a-b)}\left|1+e^{2 \pi i \theta_l}\right|^{a+b} \prod_{1 \leq j<k \leq N}\left|e^{2 \pi i \theta_k}-e^{2 \pi i \theta_j}\right|^{2 \lambda} \\
& =\prod_{j=0}^{N-1} \frac{\Gamma(\lambda j+a+b+1) \Gamma(\lambda(j+1)+1)}{\Gamma(\lambda j+a+1) \Gamma(\lambda j+b+1) \Gamma(1+\lambda)}.
\end{aligned}
\end{align}
We note that the RHS of (\ref{4.5}) gives meaning to $\tilde{\rho}_n$ for continuous $N$.
In fact, as already demonstrated above for $\beta = 1,2$ and 4 as properties of the underlying correlation kernels, this can be shown to be even in $N$.

\begin{prop}\label{P4.1}
Fix $n$ independent of $N$ and such that $2n < N$. We have that (\ref{4.5}) is an even function of $N$.
\end{prop}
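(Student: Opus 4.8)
The plan is to treat $N$ as a continuous parameter through the right-hand side of (\ref{4.5}) and to establish directly that this expression is invariant under $N \mapsto -N$. I would organise the argument factor by factor, first disposing of the manifestly $r$-dependent pieces and then treating the pieces that depend on $N$ alone. A useful preliminary observation is that the hypothesis $2n<N$, together with $\beta$ even, forces the first upper parameter $-N+n$ of the hypergeometric function to terminate the Jack series, so that ${}_2F_1^{(\beta/2)}$ is a polynomial in its arguments and any transformation used below reduces to a polynomial identity.

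First I would dispose of the Vandermonde-type product $\prod_{1\le j<k\le n}|e^{2\pi ir_k/N}-e^{2\pi ir_j/N}|^\beta$. Writing each factor in the analytic form $|e^{2\pi ir_k/N}-e^{2\pi ir_j/N}|^\beta=\bigl(2\sin(\pi(r_k-r_j)/N)\bigr)^\beta$, the substitution $N\mapsto -N$ sends $\sin(\pi(r_k-r_j)/N)\mapsto -\sin(\pi(r_k-r_j)/N)$ and thus multiplies this factor by $(-1)^\beta=1$; this is precisely where the evenness of $\beta$ enters in an essential way, and the product may then be set aside as invariant. The heart of the matter is the hypergeometric factor. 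Under $N\mapsto -N$ the first upper parameter transforms as $-N+n\mapsto N+n$, while each $t_k=e^{-2\pi i(r_j-r_1)/N}$ is sent to its reciprocal, so the arguments become $1-1/t_k$. Because the lower and second upper parameters satisfy $2n-n=n$, so that the second upper parameter is self-dual, the passage $(-N+n,n;2n)\mapsto(N+n,n;2n)$ combined with $\{1-t_k\}\mapsto\{1-1/t_k\}$ is exactly of the type governed by the Euler--Pfaff transformation of the generalised hypergeometric function ${}_2F_1^{(\alpha)}$ based on Jack polynomials (see the transformation theory in \cite{Fo10}, \cite{Ka93}). Applying it yields an identity of the schematic form ${}_2F_1^{(\beta/2)}(-N+n,n;2n;\{1-t_k\})=\prod_k t_k^{\gamma}\,{}_2F_1^{(\beta/2)}(N+n,n;2n;\{1-1/t_k\})$, where $\gamma$ depends on $\beta$ and on the number of variables $(n-1)\beta$, and where the hypergeometric on the right is exactly the one occurring in (\ref{4.5}) with $N$ replaced by $-N$.

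It then remains to check that the explicit prefactors balance. The phase $\prod_k t_k^{\gamma}$ produced by the transformation I would combine with the exponential factor $\prod_{k=2}^n e^{2\pi i\beta(r_k-r_1)(N-n)/N}$, verifying that the result reproduces the corresponding exponential factor evaluated at $-N$. The remaining factors depend on $N$ alone: the ratio $(N-n+1)_n((\beta/2)!)^N/\bigl(N^n\Gamma(\beta N/2+1)\bigr)$ and the Morris integral $M_{N-n}(n\beta/2,n\beta/2,\beta/2)$. For these I would invoke the reflection formula $\Gamma(z)\Gamma(1-z)=\pi/\sin\pi z$ to relate $\Gamma(\beta N/2+1)$ and $((\beta/2)!)^{\pm N}$ at $\pm N$, and the analytic continuation of the product representation of the Morris integral --- rewritten through $\Gamma$-functions (equivalently Barnes $G$-functions) so that its value at argument $-N-n$ is expressed in terms of that at $N-n$ --- to show that this purely $N$-dependent combination is itself even in $N$.

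The main obstacle is twofold. The analytic crux is pinning down the exact Euler--Pfaff transformation in the multivariable Jack setting, including the precise exponent $\gamma$: unlike the scalar ${}_2F_1$, it carries corrections depending on $\alpha=\beta/2$ and on the number of variables, and since $n\ge 2$ with $\beta$ even never reduces to a single-variable function, these corrections cannot be read off from the classical case but must be taken from the rigorous transformation theory. They are exactly what is needed for the phase to cancel against the exponential factor, so accuracy here is indispensable. The second delicate point is continuing the Morris-integral product to negative $N$ in a way compatible with the continuation implicit in the other Gamma factors. Once both ingredients are secured, the verification that all prefactors cancel is mechanical, if lengthy, and is most safely confirmed with computer algebra.
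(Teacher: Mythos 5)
Your overall strategy coincides with the paper's: treat $N$ as continuous, handle (\ref{4.5}) factor by factor, use a Pfaff-type transformation to show the hypergeometric factor combined with the exponential phase is even in $N$, and check the purely $N$-dependent prefactor separately. Two remarks on the parts you got right but over-complicated. First, the Vandermonde factor carries an absolute value, so it equals $\prod_{j<k}|2\sin(\pi(r_k-r_j)/N)|^\beta$ and is even in $N$ for \emph{any} $\beta>0$; the evenness of $\beta$ is not where the argument lives. Second, your "analytic crux" is not actually delicate: the multivariable Pfaff transformation for ${}_2F_1^{(\alpha)}$ (cited in the paper as \cite[Prop.~13.1.6]{Fo10}) reads
\begin{equation*}
{\vphantom{F}}_2^{\mathstrut} F_1^{(\alpha)}\left(a, b ; c ; t_1, \ldots, t_m\right)=\prod_{j=1}^m\left(1-t_j\right)^{-b}\,
    {\vphantom{F}}_2^{\mathstrut} F_1^{(\alpha)}
    \left(c-a, b ; c ;-\tfrac{t_1}{1-t_1}, \ldots,-\tfrac{t_m}{1-t_m}\right),
\end{equation*}
so with arguments $1-t_k$ the prefactor is exactly $\prod_k t_k^{-b}=\prod_k t_k^{-n}$, with no $\alpha$- or variable-count-dependent correction; your unknown exponent $\gamma$ is just $-n$, and the phase bookkeeping against $\prod_{k=2}^n e^{2\pi i\beta(r_k-r_1)(N-n)/N}$ then closes without difficulty.

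The genuine weak point is your treatment of the purely $N$-dependent prefactor $\frac{(N-n+1)_n((\beta/2)!)^N}{N^n\Gamma(\beta N/2+1)}\,M_{N-n}(n\beta/2,n\beta/2,\beta/2)$. The reflection formula $\Gamma(z)\Gamma(1-z)=\pi/\sin\pi z$ does not address the issue (for instance it says nothing about $((\beta/2)!)^{N}$ versus $((\beta/2)!)^{-N}$), and "analytic continuation of the Morris product to argument $-N-n$" is not a proof step until the $N$-dependent-\emph{length} product $\prod_{j=0}^{N-n-1}$ has been rewritten as a product of a fixed, $N$-independent number of factors. That rewriting is exactly what the paper does: substituting the Morris product formula and telescoping the gamma ratios (this is where the hypothesis $N>2n$ is used), the entire $N$-dependent part collapses to
\begin{equation*}
(\kappa N)^{-n+1}\prod_{k=1}^{n-1}\frac{\Gamma(\kappa(N+k)+1)}{\Gamma(\kappa(N-k))}
=\prod_{k=1}^{n-1}\prod_{l=1}^{k\kappa}\left(\kappa^2N^2-l^2\right),
\end{equation*}
a polynomial in $N^2$, hence manifestly even. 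Without this (or an equivalent) explicit reduction, your claim that the prefactor is even remains an unverified assertion rather than a consequence of the identities you invoke.
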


\begin{proof}
 The first observation is that by using the transformation formula (see e.g.~\cite[Prop.~13.1.6]{Fo10})
\begin{align}
    {\vphantom{F}}_2^{\mathstrut} F_1^{(\alpha)}\left(a, b ; c ; t_1, \ldots, t_m\right)=\prod_{j=1}^m\left(1-t_j\right)^{-b}
    {\vphantom{F}}_2^{\mathstrut} F_1^{(\alpha)}
    \left(c-a, b ; c ;-\frac{t_1}{1-t_1}, \ldots,-\frac{t_m}{1-t_m}\right),
\end{align}
it follows that the second line of (\ref{4.5}) is even in $N$.
It is immediate that the product $\prod_{1\leq j<k\leq n}|e^{2\pi ir_k/N}-e^{2\pi ir_j/N}|^\beta$ is also invariant under negation of $N$. The remaining factor, after substitution of the Morris integral formula, is 
\begin{align}
    \frac{(N-n+1)_n((\beta/2)!)^n}{N^n\Gamma\left(\frac{\beta N}{2}+1\right)}\prod_{j=0}^{N-n-1}\frac{\Gamma\left(\frac{\beta}{2}(j+2n)+1\right)\Gamma\left(\frac{\beta}{2}(j+1)+1\right)}{\Gamma\left(\frac{\beta}{2}(j+n)+1\right)^2}.
\end{align}
Assume that $N>2n$ and denote $\beta/2=\kappa$
as used in the paragraph below (\ref{X6}), then the cancellation in the product of the gamma functions yields
\begin{align}
    \frac{(N-n+1)_n(\kappa!)^n}{N^n\Gamma(n\kappa+1)}\prod_{k=1}^{n-1}\frac{\Gamma(\kappa(N+k)+1)\Gamma(\kappa k+1)}{\Gamma(\kappa(N-n+k)+1)\Gamma(\kappa(n+k)+1)},
\end{align}
in which the $N$-dependent part is
\begin{align}
\frac{(N-n+1)_n}{N^n}\prod_{k=1}^{n-1}\frac{\Gamma(\kappa(N+k)+1)}{\Gamma(\kappa(N-n+k)+1)}=(\kappa N)^{-n+1}\prod_{k=1}^{n-1}\frac{\Gamma(\kappa(N+k)+1)}{\Gamma(\kappa(N-n+k))} .
\end{align}
This can then be further simplified to give
\begin{align}
    (\kappa N)^{-n+1}\prod_{k=1}^{n-1}\frac{\Gamma(\kappa(N+k)+1)}{\Gamma(\kappa(N-k))}=\prod_{k=1}^{n-1}\prod_{l=1}^{k\kappa}(\kappa^2N^2-l^2),
\end{align}
which is thus also even in $N$, and so all terms in (\ref{4.5}) are even in $N$.
\end{proof}

It follows from Proposition \ref{P4.1} that if, for $\beta$ even, the existence of an asymptotic expansion in inverse powers of $N$ can be established, then in fact the expansion is in inverse powers of $N^{-2}$. Working relevant to establishing that the generalised hypergeometric function in (\ref{4.5}) permits an expansion in $1/N$ up to order $1/N^2$ at least has been given in the recent work \cite{Wi24} (strictly speaking the dependence on $N$ in the case of the particular ${\vphantom{F}}_2^{\mathstrut} F_1^{(\alpha)}$ appearing in \cite{Wi24} is different to its appearance in (\ref{4.5}) so this is not immediate). However, restricting attention to the two-point correlation, one has available a particular $\beta$-dimensional integral form of the required
generalised hypergeometric function from which the existence of a $1/N$ expansion is easy to establish. This integral form reads
\cite{Fo94j}, \cite[Prop.~13.2.2, after deleting a spurious prefactor $(\beta N/2)!$]{Fo10},
\begin{align} 
\tilde{\rho}_{(2),\beta}\left(r_1, r_2\right)= & \frac{(N-1)((\beta / 2)!)^{N}}{N(\beta N / 2)!} \frac{M_{N-2}(\beta , \beta , \beta / 2)}{S_\beta(1-2 / \beta, 1-2 / \beta, 2 / \beta)} \nonumber \\
& \times\left(2 \sin \pi\left(r_1-r_2\right) / N\right)^\beta e^{-\pi i \beta \left(r_1-r_2\right)(N-2) / N} \int_{[0,1]^\beta} d u_1 \cdots d u_\beta \nonumber \\
& \times \prod_{j=1}^\beta\left(1-\left(1-e^{2 \pi i\left(r_1-r_2\right) / N}\right) u_j\right)^{N-2} u_j^{-1+2 / \beta}\left(1-u_j\right)^{-1+2 / \beta} \prod_{j<k}\left|u_k-u_j\right|^{4 / \beta}. \label{4.12b}
\end{align}
Here $S_n(a,b,c)$ is the Selberg integral (see e.g.~\cite[\S 4.1]{Fo10}), which for the stated parameters is such that the multiple integral is normalised to unity for $r_1-r_2=0$.

\begin{prop}\label{P4.2}
    The RHS of (\ref{4.12b}) admits an asymptotic expansion in $1/N$. Combining this with Proposition \ref{P4.1} gives the $1/N^2$ expansion
    \begin{equation}
    \tilde{\rho}_{(2),\beta}\left(x, 0\right)
    \sim  \rho_{(2),0,\beta}^{\rm bulk}(x,0) +
    \sum_{l=1}^\infty {1 \over N^{2l} } \rho_{(2),l,\beta}^{\rm bulk}(x,0),\label{exp-rho}
\end{equation}    
for some $\{ \rho_{(2),l,\beta}^{\rm bulk}(x,0) \}_{l=1,2,\dots}$,   and where
\begin{multline}
\rho_{(2),0,\beta}^{\rm bulk} (x,0)  = 
 (\beta / 2)^\beta {((\beta /2)!)^3 \over \beta! (3 \beta /2)!} 
{e^{- \pi i \beta  x} (2 \pi  x)^\beta \over
 S_\beta(-1+ 2/\beta,-1+ 2/\beta,2/\beta)} \\
\times \int_{[0,1]^\beta} du_1 \cdots du_\beta
 \prod_{j=1}^\beta
e^{2 \pi i x u_j} u_j^{-1 + 2/\beta} (1 - u_j)^{-1 + 2/\beta}
\prod_{j < k} |u_k - u_j|^{4/\beta}.  \label{15.eev1}
\end{multline}
\end{prop}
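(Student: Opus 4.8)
The plan is to read off the large-$N$ behaviour of the right-hand side of (\ref{4.12b}) factor by factor, establish a $1/N$ expansion for each factor separately, and then multiply. Setting $r_1=x$, $r_2=0$, the right-hand side of (\ref{4.12b}) is a product of four $N$-dependent pieces: (i) the Gamma-function prefactor $P_N:=\frac{(N-1)((\beta/2)!)^N}{N(\beta N/2)!}\,M_{N-2}(\beta,\beta,\beta/2)$; (ii) the factor $(2\sin(\pi x/N))^\beta$; (iii) the phase $e^{-\pi i\beta x(N-2)/N}$; and (iv) the $\beta$-dimensional integral, whose dependence on $N$ enters only through $\prod_{j=1}^\beta(1-(1-e^{2\pi ix/N})u_j)^{N-2}$. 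Pieces (ii) and (iii) are elementary: $(2\sin(\pi x/N))^\beta=(2\pi x/N)^\beta(1-(\pi x/N)^2/6+\cdots)^\beta$ supplies a factor $(2\pi x)^\beta N^{-\beta}$ times a $1/N$ series, while $e^{-\pi i\beta x(N-2)/N}=e^{-\pi i\beta x}\,e^{2\pi i\beta x/N}$ supplies $e^{-\pi i\beta x}$ times a $1/N$ series.

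The heart of the argument is piece (iv). Since $1-e^{2\pi ix/N}=-2\pi ix/N+{\rm O}(N^{-2})$ uniformly, one has
\begin{equation*}
(N-2)\log\bigl(1-(1-e^{2\pi ix/N})u_j\bigr)=2\pi i x\,u_j+\sum_{m\ge1}\frac{a_m(u_j)}{N^m},
\end{equation*}
where each $a_m(u_j)$ is a polynomial in $u_j$; exponentiating shows that $\prod_{j}(1-(1-e^{2\pi ix/N})u_j)^{N-2}$ has an asymptotic expansion in $1/N$, uniform on $[0,1]^\beta$, with leading term $\prod_j e^{2\pi ixu_j}$. Because $2/\beta>0$ and $4/\beta>0$, the weight $\prod_j u_j^{-1+2/\beta}(1-u_j)^{-1+2/\beta}\prod_{j<k}|u_k-u_j|^{4/\beta}$ is integrable on $[0,1]^\beta$, so this uniform expansion may be integrated term by term, yielding a $1/N$ expansion for the integral whose leading coefficient is exactly the integral appearing in (\ref{15.eev1}).

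For the prefactor (i), the existence of a $1/N$ expansion follows from Stirling's asymptotic series applied to $((\beta/2)!)^N/(\beta N/2)!$ together with the large-$N$ asymptotics of the product of Gamma functions defining $M_{N-2}(\beta,\beta,\beta/2)$ (conveniently handled by recognising products such as $\prod_{j=0}^{N-3}\Gamma(\tfrac{\beta}{2}j+c)$ as ratios of Barnes $G$-functions, whose asymptotic expansions are in $1/N$, or equivalently by Euler--Maclaurin summation of $\log\Gamma$). Multiplying the four $1/N$ series establishes the first assertion of the Proposition. The factor $N^{-\beta}$ from (ii) is cancelled by the $N^{\beta}$ growth of $P_N$; collecting the four leading coefficients — the value $\lim_{N\to\infty}N^{-\beta}P_N=(\beta/2)^\beta\tfrac{((\beta/2)!)^3}{\beta!(3\beta/2)!}$ from (i), the factor $(2\pi x)^\beta$ from (ii), $e^{-\pi i\beta x}$ from (iii), and the stated integral from (iv) — and dividing by the normalising Selberg integral reproduces (\ref{15.eev1}). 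Finally, Proposition \ref{P4.1} asserts that (\ref{4.5}), hence (\ref{4.12b}), is even in $N$ for $2n<N$; an even function possessing an asymptotic series in $1/N$ must have all odd-order coefficients vanish, so the expansion collapses to the stated series (\ref{exp-rho}) in powers of $1/N^2$.

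The main obstacle I expect is the asymptotic analysis of the prefactor (i): one must show both that the Morris-integral product $M_{N-2}(\beta,\beta,\beta/2)$ admits a $1/N$ expansion and that its leading coefficient combines with $((\beta/2)!)^N/(\beta N/2)!$ to produce precisely $(\beta/2)^\beta((\beta/2)!)^3/(\beta!(3\beta/2)!)$. By contrast, the uniform expansion and term-by-term integration in (iv) are conceptually central but technically routine once the integrability of the weight is noted, and the reduction from a $1/N$ to a $1/N^2$ expansion via evenness is immediate from Proposition \ref{P4.1}.
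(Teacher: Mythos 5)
Your proposal is correct and follows essentially the same route as the paper: a uniform $1/N$ expansion of $\prod_j\bigl(1-(1-e^{2\pi i x/N})u_j\bigr)^{N-2}$ via the logarithm, integrated term by term against the (integrable) Selberg-type weight, combined with the evenness in $N$ from Proposition \ref{P4.1} to collapse the series to powers of $N^{-2}$. The one place you overcomplicate matters is the prefactor, which you flag as the main obstacle: rather than invoking Stirling or Barnes $G$ asymptotics for $((\beta/2)!)^N/(\beta N/2)!$ and the Morris product separately, the paper simply reuses the cancellation already carried out in the proof of Proposition \ref{P4.1}, under which the entire Gamma-function prefactor telescopes to an explicit polynomial in $N^2$ times a power of $N$ (for $n=2$, essentially $\prod_{l=1}^{\kappa}(\kappa^2N^2-l^2)$ up to $N$-independent constants), so that step is in fact immediate and its $1/N$ expansion, leading constant included, requires no asymptotic machinery.
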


\begin{proof}
With $A_N(x) := 1 + \sum_{l=1}^\infty {(2\pi ix)^l \over (l+1)! N^l}$, we write
\begin{multline}
\prod_{j=1}^\beta\left(1-\left(1-e^{2 \pi i x / N}\right) u_j\right)^N \\
= \prod_{j=1}^\beta e^{N \log  (1 + {2 \pi i x u_j \over N} A_N(x)  )} 
= \prod_{j=1}^\beta  e^{2 \pi i x u_j (1+\sum_{l=1}^\infty {(2 \pi i x )^l \over (l+1)! N^l})}
e^{2 \pi i x u_j \sum_{p=2}^\infty {(-1)^{p-1} \over p N^{p+1}} (A_N(x))^p}
\end{multline}
Since for fixed $x$ this expression admits a $1/N$ expansion, uniform in the integration variables on their domain of support,
it follows immediately that the multi-dimensional integral similarly admits a $1/N$ expansion. In addition, the working of the proof of Proposition \ref{P4.1} shows that all the prefactors to the multi-dimensional integral admit $1/N$ expansions, which then as said is lifted to a $1/N^2$ expansion according to the result of Proposition \ref{P4.1}. The explicit form of the limit
(\ref{15.eev1}), already known from \cite{Fo94j}, can be read off by keeping track of the leading order terms throughout.
\end{proof}

\subsection{Functional form of the $1/N^2$ correction for the two-point function}

Starting with (\ref{4.12b}), it is possible to deduce the generalisation of the formulas
(\ref{5.1}) and  (\ref{5.1B})  linking the first correction $\rho_{(2),1,\beta}^{\rm bulk}(x,0)$ in (\ref{exp-rho}) to the limit
$\rho_{(2),0,\beta}^{\rm bulk}(x,0)$ for all
even $\beta$.

\begin{prop}
With reference to the asymptotic expansion (\ref{exp-rho}), for all even $\beta$ we have
\begin{equation}\label{4.21}
    \rho_{(2),1,\beta}^{\rm bulk}(x,0)=-\frac{1}{6\beta}\frac{d^2}{dx^2}\bigg(x^2\rho_{(2),0,\beta}^{\rm bulk} (x,0)\bigg).
\end{equation}
    \end{prop}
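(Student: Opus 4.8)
The plan is to work directly with the integral representation (\ref{4.12b}), extracting the explicit $1/N$ expansion of the two-point function and then matching coefficients to verify (\ref{4.21}). The key structural observation is that, as shown in the proof of Proposition \ref{P4.2}, the only source of $N$-dependence inside the integral is the factor $\prod_{j=1}^\beta (1-(1-e^{2\pi i x/N})u_j)^{N-2}$, whose logarithm expands as $\sum_{j=1}^\beta \big( 2\pi i x u_j - \tfrac{1}{N}(2\pi i x u_j)(1-2\pi i x u_j/2) + \cdots\big)$ after accounting also for the $(N-2)$ rather than $N$ in the exponent and the prefactor $(2\sin\pi(r_1-r_2)/N)^\beta e^{-\pi i\beta(r_1-r_2)(N-2)/N}$. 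I would first collect all the $N$-dependent factors (the Selberg/Morris normalisation, the sine power, the phase, and the integrand) and expand each to order $1/N^2$; Proposition \ref{P4.1} guarantees the odd-order terms must cancel, so I may either verify this as a check or invoke it to halve the bookkeeping.

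Next I would introduce the natural operator that governs the correction. Writing $F(x) := e^{2\pi i x u} u^{-1+2/\beta}(1-u)^{-1+2/\beta}$ as the typical integrand factor, one notes that differentiation in $x$ brings down factors of $2\pi i u_j$ inside the integral, exactly the combinations that appear in the $1/N$ expansion of $\log\prod_j(1-(1-e^{2\pi i x/N})u_j)^{N-2}$. The strategy is therefore to express the order-$1/N^2$ term of the integrand as a differential operator in $x$ applied to the leading integrand, then push the $x$-derivatives outside the integral. Concretely, I expect that after combining the integrand correction with the corrections coming from the sine power and the phase, the $1/N^2$ coefficient of $\tilde\rho_{(2),\beta}(x,0)$ assembles into the form $-\tfrac{1}{6\beta}\tfrac{d^2}{dx^2}\big(x^2 \rho_{(2),0,\beta}^{\rm bulk}(x,0)\big)$, since $x^2 \rho^{\rm bulk}_{(2),0,\beta}$ carries the factors $x^\beta$ and $x^2$ whose double derivative reproduces precisely the polynomial-in-$u$ weightings generated at second order.

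The main technical step is proving that the various second-order contributions---those from the $N$ in the exponent versus the $N$ in $A_N(x)$, those from the $(N-2)$ shift, and those from the prefactors---combine so that everything is an $x$-derivative of the leading term rather than a genuinely new integral. Here I would use the identity $\tfrac{d}{dx}\big(e^{2\pi i x u_j}\big) = 2\pi i u_j\, e^{2\pi i x u_j}$ repeatedly to convert every stray $u_j$-monomial under the integral sign into an $x$-derivative acting on $\rho^{\rm bulk}_{(2),0,\beta}$, modulo contributions involving $\sum_j u_j$ and $\sum_j u_j^2$ that must be reorganised using the symmetry of the Selberg-type weight $\prod_{j<k}|u_k-u_j|^{4/\beta}$. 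The factor $\tfrac{1}{\beta}$ in the coefficient is anticipated to emerge from the symmetrisation over the $\beta$ integration variables.

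The hardest part will be controlling the terms quadratic in $\sum_j (2\pi i x u_j)$, i.e.~the cross terms $\big(\sum_j 2\pi i x u_j\big)^2$ that arise when squaring the first-order correction in the exponential: these do not directly factor as a single $x$-derivative and must be shown to reassemble, after integration against the symmetric weight, into the $\tfrac{d^2}{dx^2}$ structure. I expect this to require either an integration-by-parts argument in the $u_j$ variables or the recursion identity for multiple integrals flagged for Appendix A, which is precisely advertised as the tool for relating higher-order corrections to derivative operations on the limiting functional form. Once that reorganisation is in hand, the matching of the leading-order prefactors is routine and the identity (\ref{4.21}) follows.
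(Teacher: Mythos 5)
Your proposal follows essentially the same route as the paper's proof: expand the $N$-dependent factor $\prod_j(1-(1-e^{2\pi i x/N})u_j)^{N-2}$ and the prefactors of (\ref{4.12b}) to order $N^{-2}$, then use Aomoto-style integration by parts in the $u_j$ variables against the symmetric Selberg weight to reduce every inserted $u_j$-monomial (including the quadratic cross terms you correctly flag as the hard part) to $\theta$-derivatives of the leading integral $\mathcal{I}[1]$, after which the prefactor matching yields (\ref{4.21}). The paper carries this out explicitly via the relations $\mathcal{I}[\sum_j u_j]=-i\partial_\theta\mathcal{I}[1]$ and $\mathcal{I}[(\sum_j u_j)^2]=-\partial_\theta^2\mathcal{I}[1]$, exactly as you anticipate.
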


\begin{proof}
In order to obtain the functional form of the $1/N^2$ correction, we start from the calculation of the $1/N$ expansion of the integral 
\begin{align}\label{int1}
    \int_{[0,1]^\beta} d u_1 \cdots d u_\beta\prod_{j=1}^\beta\left(1-\left(1-e^{2 \pi i\left(r_1-r_2\right) / N}\right) u_j\right)^{N-2} u_j^{-1+2 / \beta}\left(1-u_j\right)^{-1+2 / \beta} \prod_{j<k}\left|u_k-u_j\right|^{4 / \beta}.
\end{align}
Introduce the notation
\begin{align*}
    \mathcal{I}[f(u_1,\dots,u_\beta)]=\int_{[0,1]^\beta} d u_1 \cdots d u_\beta \prod_{j=1}^{\beta}e^{i \theta u_j} u_j^{-1+2 / \beta}\left(1-u_j\right)^{-1+2 / \beta} \prod_{j<k}\left|u_k-u_j\right|^{4 / \beta}f(u_1,\dots,u_\beta).
\end{align*}
Let $\theta=2\pi x$, then by substituting the expansion up to terms O$(N^{-3})$
\begin{align*}
    \bigg(1-&\left(1-e^{ i\theta / N}\right) u_j\bigg)^{N-2}\sim e^{i \theta u_j}+\frac{\theta u_j e^{i \theta 
   u_j} (\theta  u_j-\theta -4 i)}{2 N}\\
   &+\frac{e^{i \theta u_j}}{N^2}\bigg(\frac{1}{2}\bigg(\frac{\theta^2}{2}(u_j^2-u_j)-2 i\theta u_j\bigg)^2-\theta^2(u_j^2-u_j)+\frac{i\theta^3}{6}u_j(u_j-1)+\frac{i\theta^3}{3}u_j^2(1-u_j)\bigg),
\end{align*}
we see that \eqref{int1} can be expanded 
up to terms O$(N^{-3})$ as
\begin{align}\label{expdI}
    \mathcal{I}[1]+& \mathcal{I}\bigg[\sum_{j=1}^{\beta} \theta u_j (\theta  u_j-\theta -4 i)\bigg]\frac{1}{2N}+\frac{1}{N^2}\bigg(\frac{\theta^4}{8}\mathcal{I}\bigg[\sum_{j=1}^{\beta}u_j^2(1-u_j)^2\bigg]+i\theta^3\mathcal{I}\bigg[\sum_{j=1}^{\beta}u_j^2(1-u_j)\bigg]\nonumber \\
    &-2\theta^2\mathcal{I}\bigg[\sum_{j=1}^{\beta}u_j^2\bigg]+\theta^2\bigg(1-\frac{i\theta}{6}\bigg)\mathcal{I}\bigg[\sum_{j=1}^{\beta}u_j(1-u_j)\bigg]+\frac{i\theta^3}{3}\mathcal{I}\bigg[\sum_{j=1}^{\beta}u_j^2(1-u_j)\bigg]\bigg)\nonumber \\
    &+\frac{1}{8N^2}\bigg(\theta^4\mathcal{I}\bigg[\sum_{j\neq k}u_j(1-u_j)u_k(1-u_k)\bigg]+8i\theta^3\mathcal{I}\bigg[\sum_{j\neq k}u_ju_k(1-u_k)\bigg]-16\theta^2\mathcal{I}\bigg[\sum_{j\neq k}u_ju_k\bigg]\bigg).
\end{align}

The integrals in the above expansion can be evaluated by performing integration by parts on each variable $u_j$ and then taking the sum. 
This general strategy was introduced into the wider theory of Selberg integrals by Aomoto \cite{Ao87}; see \cite[\S 4.6]{Fo10} for more on this. With
\begin{align*}
    f(u):=\prod_{j=1}^{\beta}e^{i \theta u_j} u_j^{-1+2 / \beta}\left(1-u_j\right)^{-1+2 / \beta} \prod_{j<k}\left|u_k-u_j\right|^{4 / \beta},
\end{align*}
taking the partial derivative on each side of $u_k^m(1-u_k)^nf(u)$ gives
\begin{align*}
    \partial_{u_k}&(u_k^m(1-u_k)^nf(u))\\
    &=\bigg(i\theta +\bigg(\frac{2}{\beta}+m-1\bigg)\frac{1}{u_k}-\bigg(\frac{2}{\beta}+n-1\bigg)\frac{1}{1-u_k}+\sum_{\substack{l=1\\ l\neq k}}^{\beta}\frac{1}{u_k-u_l}\bigg)u_k^m(1-u_k)^nf(u).
\end{align*}
Starting from the simplest case with $m=n=1$, using integration by parts, this shows
\begin{align*}
    i\theta \mathcal{I}\bigg[\sum_{k=1}^{\beta}u_k(1-u_k)\bigg]&=-\frac{2}{\beta}\mathcal{I}\bigg[\sum_{k=1}^{\beta}(1-u_k)\bigg]+\frac{2}{\beta}\mathcal{I}\bigg[\sum_{k=1}^{\beta}u_k\bigg]-\frac{4}{\beta}\mathcal{I}\bigg[\sum_{l\neq k}\frac{u_k(1-u_k)}{u_k-u_l}\bigg]\\
    &=-2\beta\mathcal{I}[1]+4\mathcal{I}\bigg[\sum_{k=1}^{\beta}u_k\bigg]=-2\beta\mathcal{I}[1]-4i\partial_\theta\mathcal{I}[1],
\end{align*}
where we have used the formula
\begin{align*}
    \sum_{k=1}^{\beta}\sum_{\substack{l=1\\l\neq k}}^{\beta}\frac{u_k(1-u_k)}{u_k-u_l}=\frac{\beta(\beta-1)}{2}-(\beta-1)\sum_{k=1}^\beta u_k.
\end{align*}
Similar calculations show 
\begin{align*}
    -i\theta\mathcal{I}\bigg[\sum_{j=1}^{\beta}u_j^2(1-u_j)\bigg]=&\bigg(\frac{2}{\beta}+1\bigg)\mathcal{I}\bigg[\sum_{j=1}^{\beta}u_j(1-u_j)\bigg]-\frac{2}{\beta}\mathcal{I}\bigg[\sum_{j=1}^{\beta}u_j^2\bigg]+\frac{4}{\beta}\mathcal{I}\bigg[\sum_{j\neq l}\frac{u_j^2(1-u_j)}{u_j-u_l}\bigg] \\
    =&\bigg(5-\frac{2}{\beta}\bigg)\mathcal{I}\bigg[\sum_{j=1}^{\beta}u_j(1-u_j)\bigg]-\frac{2}{\beta}\mathcal{I}\bigg[\bigg(\sum_{j=1}^{\beta}u_j\bigg)^2\bigg],\\
    -i\theta\mathcal{I}\bigg[\sum_{j=1}^{\beta}u_j^2(1-u_j)^2\bigg]=&
    \bigg(5-\frac{2}{\beta}\bigg)\mathcal{I}\bigg[\sum_{j=1}^{\beta}u_j(1-u_j)\bigg]-6\mathcal{I}\bigg[\sum_{j=1}^{\beta}u_j^2(1-u_j)\bigg]-\frac{4}{\beta}\mathcal{I}\bigg[\sum_{j\neq l}u_lu_j(1-u_j)\bigg],
\end{align*}
and
\begin{small}
    \begin{align*}
    &- i\theta \mathcal{I}\left[\sum_{j \neq k} u_j\left(1-u_j\right) u_k\left(1-u_k\right)\right]= \frac{2}{\beta}(\beta-1)^2\mathcal{I}\left[\sum_{k=1}^\beta u_k\left(1-u_k\right)\right]+\left(\frac{4}{\beta}-4\right) \mathcal{I}\left[\sum_{j \neq k} u_j u_k\left(1-u_k\right)\right],\\
    &- i\theta \mathcal{I}\left[\sum_{j \neq k} u_j u_k\left(1-u_k\right)\right]
    =2 \beta \mathcal{I}\left[\sum_{j=1}^{\beta} u_j\right]+\frac{2}{\beta}(1-2 \beta)\left(\mathcal{I}\left[\sum_{j=1}^{\beta} u_j\left(1-u_j\right)\right]+\mathcal{I}\left[\left(\sum_{j=1}^{\beta} u_j\right)^2\right]\right).
\end{align*}
\end{small}

 Substituting the above equations in \eqref{expdI}, we see that \eqref{int1} can be written up to terms ${\rm O}(N^{-3})$
\begin{align}\label{Az}
    \mathcal{I}[1]-i\theta\beta\mathcal{I}[1]\frac{1}{N}+\frac{\theta}{12 \beta }  ((\beta\theta+2 i (\beta +2)) (\beta 
   \mathcal{I}[1]+2 i \partial_{\theta} \mathcal{I}[1])-6\beta^3\theta\mathcal{I}[1]-2 \theta\partial_{\theta}^2 \mathcal{I}[1]  )\frac{1}{N^2},
\end{align}
where we have used the simple relations
\begin{align*}
    \mathcal{I}\left[\sum_{j=1}^{\beta} u_j\right]=-i\partial_\theta\mathcal{I}[1],\quad \mathcal{I}\left[\bigg(\sum_{j=1}^{\beta} u_j\bigg)^2\right]=-\partial_\theta^2\mathcal{I}[1].
\end{align*}
Now we look at the prefactors in \eqref{4.12b}. After simplification, we have
\begin{align*}
    {((\beta /2)!)^3 \over \beta! (3 \beta /2)!S_\beta\left(1 - 2/\beta,\, 1 - 2/\beta,\, 2/\beta\right)}
\frac{
  (N - 1)\,\Gamma\left(\beta(N + 1)/2 + 1\right)
}{
  N\,\Gamma\left(\beta(N - 1)/2 + 1\right)
}
\left( 2 \sin\left( \frac{\theta}{2N} \right) \right)^{\beta}
e^{- i \beta \theta \frac{(N - 2)}{2N}}.
\end{align*}
For large $N$
\begin{align*}
   &\frac{
  (N - 1)\,\Gamma\left(\beta(N + 1)/2 + 1\right)
}{
  N\,\Gamma\left(\beta(N - 1)/2 + 1\right)
}=\bigg(\frac{\beta}{2}\bigg)^\beta \bigg(1+\frac{-\beta ^2-3 \beta -2}{6 \beta 
   N^2}+{\rm O}\left(N^{-3}\right)\bigg),\\
   &2 \sin\left( \frac{\theta}{2N}\right)=\theta-\frac{\theta^3}{24 N^2}+{\rm O}\left(N^{-4}\right),\quad e^{- i \beta \theta \frac{(N - 2)}{2N}}=e^{-\frac{1}{2} i \beta \theta} \left(1+\frac{i \beta
   \theta}{N}-\frac{\beta^2 \theta^2}{2 N^2}+{\rm O}(N^{-3})\right).
\end{align*}
Multiplying the so implied $1/N$ expansion of the prefactors with the $1/N$ expansion (\ref{Az}) show that the first correction 
in \eqref{exp-rho} (i.e.~the coefficient of $1/N^2)$ is given by
\begin{align}\label{rho21b}
    \rho_{(2),1,\beta}^{\rm bulk}(x,0)=&\mathcal{A}_\beta \frac{x^\beta e^{-\pi i\beta x}}{6\beta}\bigg((\beta^2(\pi x+i)^2+\beta(4\pi ix-3)-2)\mathcal{I}[1]\notag\\
    &-2 x(2+\beta-\pi i\beta x)\partial_x \mathcal{I}[1]- x^2\partial_x^2\mathcal{I}[1]\bigg),
\end{align}
where
$$
    \mathcal{A}_\beta= {2^{\beta}\pi^\beta(\beta / 2)^\beta((\beta /2)!)^3 \over \beta! (3 \beta /2)!S_\beta\left(1 - 2/\beta,\, 1 - 2/\beta,\, 2/\beta\right)},\quad
    \mathcal{I}[1]=\frac{1}{\mathcal{A}_\beta}x^{-\beta}e^{\pi i\beta x}\rho_{(2),0,\beta}^{\rm bulk} (x,0).
$$
By substituting the later equation into \eqref{rho21b}, the stated relation (\ref{4.21}) follows.
\end{proof}

\begin{remark} ${}$
1.~In Appendix A, it is shown that the above calculations can be generalized to a recurrence relation among 
\begin{align}\label{AB1}
    \mathcal{I}^{(m)}(a_1,a_2,\dots,a_m):=\mathcal{I}\bigg[\sum_{j_1\neq j_2\neq\dots\neq j_m}u_{j_1}^{a_1}\dots u_{j_m}^{a_m} \bigg],\quad m\in\mathbb{Z}_{\geq 0},
\end{align}
which can be used to show that
the higher order corrections in \eqref{exp-rho}
are related to a finite sum over derivatives of the limit, where the coefficients in the sum are polynomials. \\
2.~As well as conjecturing that (\ref{4.21}), proved now for even $\beta$ and $\beta =1$, holds true for general $\beta > 0$, we can conjecture too that so does the $\beta$ generalisation of (\ref{5.1b}), (\ref{5.1e})
and (\ref{5.1g}),
\begin{equation}\label{5.1Ba}
 \mathcal P_{1,\beta}^{\rm bulk}(s;\xi) =
 -{1 \over 6 \beta} {d^2 \over ds^2}
 \bigg ( s^2 \mathcal P_{0,\beta}^{\rm bulk}(s;\xi) \bigg ),
\end{equation}
with (\ref{4.21}) corresponding to the case $\xi =0$. Further evidence for this is presented in Appendix B, where it is shown that (\ref{5.1Ba}) is valid for the leading $s$ powers at each order $\xi^k$ in the bulk scaling expansion of (\ref{4.3b}). 
\end{remark}

  \subsection*{Acknowledgements}
	This work  was supported
	by the Australian Research Council 
	 Discovery Project grants 
     DP210102887 and DP250102552.  Correspondence with F.~Bornemann relating to this work has been appreciated.

     \appendix
\section*{Appendix A}
\renewcommand{\thesection}{A} 
\setcounter{equation}{0}
\setcounter{prop}{0}
Here a recurrence relation for the quanitities
(\ref{AB1}) will be deduced, which has consequence for the functional form of the higher order corrections in \eqref{exp-rho}.
For this purpose, given an integer $m$, introduce the notations
\begin{align*}
    &[m]=\{1,2,\dots,m\},\quad a^{[m]}=\{a_1,\dots,a_m\},\\
    &{\rm J}^{[m]}=\{(j_1,\dots,j_m)| j_1,j_2,\dots,j_m\in[\beta],j_1\neq j_2\neq\dots\neq j_m\}.
\end{align*}
Denote by $\tau=(\tau_1,\dots,\tau_n)$ a partition of $m$ with length $|\tau|=n$. Associate to $\tau $ a partition $s_\tau$ of $|\tau|$, i.e. $s_\tau=(s_1,\dots,s_k)$ satisfying $\sum_{i=1}^{k}s_i=n $, with each component indicating the number of identical elements in $\tau$. As a set of integers, we adopt the notation $\tau!=\tau_1!\tau_2!\dots\tau_n!$. We have the following recursive identity.
\begin{prop}\label{propA1}
    For any set of integers $a^{[m]}$ with $a_1\geq 2$, the integrals $\mathcal{I}^{(m)}(a_1,a_2,\dots,a_m)$ satisfy the recurrence relation
    \begin{small}
        \begin{align}\label{recur-rl-I}
        \begin{aligned}
            -i\theta&(\mathcal{I}^{(m)}(a_1-1,a^{[m]}\backslash\{a_1\})-\mathcal{I}^{(m)}(a^{[m]}))\\
        =&\frac{4}{\beta}\bigg[\sum_{j=1}^{2}(-1)^j\sum_{k=0}^{\lfloor \frac{a_1-j}{2}\rfloor}\mathcal{I}^{(m+1)}(k,a_1-j-k,a^{[m]}\backslash\{a_1\})-\frac{1}{2}e(a_1-j)\mathcal{I}^{(m+1)}\bigg(\frac{a_1-j}{2},\frac{a_1-j}{2},a^{[m]}\backslash\{a_1\}\bigg)\\
        &+\sum_{k=2}^{m}{\rm sgn}(a_1-a_k)\bigg(\sum_{j=1}^{2} (-1)^{j}\sum_{l={\rm min}\{a_1+1-j,a_k\}}^{\lfloor \frac{a_1+a_k-j}{2}\rfloor}\mathcal{I}^{(m)}(l,a_1+a_k-j-l,a^{[m]}\backslash\{a_1,a_k\})\\
        &-\frac{1}{2}e(a_1+a_k-j)\mathcal{I}^{(m)}\bigg(\frac{a_1+a_k-j}{2},\frac{a_1+a_k-j}{2},a^{[m]}\backslash\{a_1,a_k\}\bigg)\bigg)\bigg]\\
        &+\bigg(\frac{2}{\beta}+a_1-2\bigg)\mathcal{I}^{(m)}(a_1-2,a^{[m]}\backslash\{a_1\})-\bigg(\frac{4}{\beta}+a_1-2\bigg)\mathcal{I}^{(m)}(a_1-1,a^{[m]}\backslash\{a_1\}),
        \end{aligned}
    \end{align}
    \end{small}
    where 
    \begin{align*}
        e(x)=\left\{\begin{array}{cc}
            1, & x\text{ is even}, \\
            0, & x\text{ is odd}.
        \end{array}\right.
    \end{align*}
    The initial conditions $\mathcal{I}^{(m)}(0,\dots,0,1,\dots,1)$ can be determined in terms of $\partial_\theta^m \mathcal{I}[1], m\in \mathbb{Z}_{\geq 0}$ by
    \begin{align*}
        \mathcal{I}^{(m)}(0,a_2,a_3,\dots,a_m)=(\beta-m+1)\mathcal{I}^{(m-1)}(a_2,a_3,\dots,a_m)
    \end{align*}
    and
    \begin{align*}
        (-i)^m\partial_{\theta}^m\mathcal{I}[1]= \mathcal{I}\bigg[\bigg(\sum_{j=1}^\beta u_j\bigg)^m\bigg]=\sum_{\tau\in\mathcal{P}(m)}\frac{m!}{\tau ! s_\tau !}\mathcal{I}^{(|\tau|)}(\tau_1,\dots,\tau_n).
    \end{align*}
\end{prop}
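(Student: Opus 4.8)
The plan is to establish both the recurrence \eqref{recur-rl-I} and the two initial-condition identities by the same integration-by-parts (Aomoto) technique already used in the proof of \eqref{4.21}, now applied to a general monomial in place of the low-degree cases treated there. The engine is that, because of the regularising factor $u_j^{-1+2/\beta}(1-u_j)^{-1+2/\beta}$ in the weight, the integral over $[0,1]^\beta$ of a suitable total derivative vanishes. Concretely, I would fix a distinct tuple $(j_1,\dots,j_m)\in{\rm J}^{[m]}$, consider
\[
\partial_{u_{j_1}}\Big[u_{j_1}^{a_1-1}(1-u_{j_1})\,u_{j_2}^{a_2}\cdots u_{j_m}^{a_m}\,f(u)\Big]
\]
with $f$ as in the proof of \eqref{4.21}, integrate over $[0,1]^\beta$, and sum over ${\rm J}^{[m]}$. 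The deliberate choice of the prefactor $u_{j_1}^{a_1-1}(1-u_{j_1})$ rather than a bare power is what makes the scheme work: at $u_{j_1}=1$ the explicit factor $(1-u_{j_1})$ combines with the weight to give $(1-u_{j_1})^{2/\beta}$, while at $u_{j_1}=0$ the hypothesis $a_1\ge 2$ makes the local power $u_{j_1}^{a_1-2+2/\beta}$ vanish; hence the boundary contributions drop out and the integral of the total derivative is zero.

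Expanding the derivative through the logarithmic derivative of the weight produces three kinds of terms. The factor $i\theta$ from $\partial_{u_{j_1}}e^{i\theta u_{j_1}}$, acting on $u_{j_1}^{a_1-1}(1-u_{j_1})=u_{j_1}^{a_1-1}-u_{j_1}^{a_1}$, yields exactly the left-hand side $-i\theta\big(\mathcal{I}^{(m)}(a_1-1,a^{[m]}\setminus\{a_1\})-\mathcal{I}^{(m)}(a^{[m]})\big)$. The pieces $\tfrac{-1+2/\beta}{u_{j_1}}$ and $-\tfrac{-1+2/\beta}{1-u_{j_1}}$ of the weight, together with $\partial_{u_{j_1}}[u_{j_1}^{a_1-1}(1-u_{j_1})]$, collapse after collecting coefficients of the pure powers into the two local terms $\big(\tfrac{2}{\beta}+a_1-2\big)\mathcal{I}^{(m)}(a_1-2,\cdots)$ and $-\big(\tfrac{4}{\beta}+a_1-2\big)\mathcal{I}^{(m)}(a_1-1,\cdots)$ on the last line of \eqref{recur-rl-I}. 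The remaining term is the Vandermonde interaction $\tfrac{4}{\beta}\sum_{l\ne j_1}\tfrac{1}{u_{j_1}-u_l}$, which I would treat exactly as in the model identity $\sum_{k}\sum_{l\ne k}\tfrac{u_k(1-u_k)}{u_k-u_l}=\tfrac{\beta(\beta-1)}{2}-(\beta-1)\sum_k u_k$ from the proof of \eqref{4.21}: antisymmetrise the pair $(j_1,l)$ against the symmetric weight and invoke $\tfrac{x^p-y^p}{x-y}=\sum_{i=0}^{p-1}x^iy^{p-1-i}$ to replace the simple pole by a polynomial.

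Here the bookkeeping splits according to whether $l$ is a \emph{free} index (not among $j_2,\dots,j_m$) or \emph{collides} with some $j_k$. The free case, after symmetrising $u_{j_1}^{a_1-1}(1-u_{j_1})$, produces sums of $\mathcal{I}^{(m+1)}$ over the splittings of $a_1-1$ and $a_1-2$ into two exponents---these are the $\sum_{j=1}^{2}$ terms of \eqref{recur-rl-I}. The collision with $j_k$ produces $\mathcal{I}^{(m)}$ over the splittings of $a_1+a_k-1$ and $a_1+a_k-2$, with the factor ${\rm sgn}(a_1-a_k)$ recording the sign of the exponent difference in $\tfrac{x^p-y^p}{x-y}$ when $a_1-1<a_k$. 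In both cases the passage from an ordered sum $\sum_{k=0}^{p}\mathcal{I}(k,p-k,\cdots)$ to the folded range $\sum_{k=0}^{\lfloor p/2\rfloor}$ with the half-weighted diagonal correction $\tfrac12 e(p)\mathcal{I}(\tfrac p2,\tfrac p2,\cdots)$ is just the symmetry $\mathcal{I}(k,p-k,\cdots)=\mathcal{I}(p-k,k,\cdots)$ of the distinct-index sum. I expect this combinatorial step---separating the free and colliding contributions while getting every summation range, every ${\rm sgn}$, and every diagonal correction exactly right---to be the main obstacle; the integration by parts itself is routine.

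Finally, the two initial-condition statements are elementary. For $\mathcal{I}^{(m)}(0,a_2,\dots,a_m)$ the vanishing first exponent makes $u_{j_1}^{0}=1$, and summing the now-unconstrained index $j_1$ over the $\beta-(m-1)$ admissible values gives the factor $\beta-m+1$ and reduces $m$ to $m-1$. For the power-sum identity I would expand $(\sum_{j=1}^\beta u_j)^m$ by grouping the maps $[m]\to[\beta]$ according to block structure: a set partition with block sizes $\tau$ contributes $\tfrac{m!}{\tau!\,s_\tau!}$ copies of a distinct-index monomial $u_{j_1}^{\tau_1}\cdots u_{j_n}^{\tau_n}$, i.e.\ of $\mathcal{I}^{(|\tau|)}(\tau_1,\dots,\tau_n)$, while the prefactor $(-i)^m\partial_\theta^m$ arises because each $\partial_\theta$ on $\mathcal{I}[1]$ brings down $i\sum_j u_j$. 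Together with the recurrence, used to lower the exponents step by step until they reach $0$ or $1$, these reduce any $\mathcal{I}^{(m)}(a_1,\dots,a_m)$ from \eqref{AB1} to the derivatives $\partial_\theta^k\mathcal{I}[1]$, which is precisely the input needed to express the higher-order corrections in \eqref{exp-rho} through derivatives of the limiting functional form.
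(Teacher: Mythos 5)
Your proposal is correct and follows essentially the same route as the paper's proof: integration by parts against the weight with the prefactor $u_{j_1}^{a_1-1}(1-u_{j_1})$, splitting the Vandermonde interaction sum into the free-index case (yielding the $\mathcal{I}^{(m+1)}$ terms) and the colliding case (yielding the $\mathcal{I}^{(m)}$ terms), antisymmetrising and expanding the difference quotient, and folding the resulting sums by symmetry with the half-weighted diagonal correction. The initial-condition identities are handled by the same elementary counting arguments as in the paper.
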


\begin{proof}
        Integration by parts shows
        \begin{align}\label{p1}
            \begin{aligned}
                -i\theta &(\mathcal{I}^{(m)}(a_1-1,a^{[m]}\backslash\{a_1\})-\mathcal{I}^{(m)}(a^{[m]}))\\
            =&\bigg(\frac{2}{\beta}+a_1-2\bigg)\mathcal{I}^{(m)}(a_1-2,a^{[m]}\backslash\{a_1\})-\bigg(\frac{4}{\beta}+a_1-2\bigg)\mathcal{I}^{(m)}(a_1-1,a^{[m]}\backslash\{a_1\})\\
            &+\frac{4}{\beta}\mathcal{I}\bigg[\sum_{{\rm J^{[m]}}}\sum_{\substack{j_{m+1}=1\\j_{m+1}\neq j_1}}^{\beta}\frac{u_{j_1}^{a_1-1}(1-u_{j_1})u_{j_2}^{a_2}\dots u_{j_m}^{a_m}}{u_{j_1}-u_{j_{m+1}}}\bigg].
            \end{aligned}
        \end{align}
        The summation in the last integral can be separated into two sums
        \begin{align*}
            \sum_{{\rm J^{[m+1]}}}\frac{u_{j_1}^{a_1-1}(1-u_{j_1})u_{j_2}^{a_2}\dots u_{j_m}^{a_m}}{u_{j_1}-u_{j_{m+1}}}+\sum_{k=2}^{m}\sum_{{\rm J^{[m]}}}\frac{u_{j_1}^{a_1-1}(1-u_{j_1})u_{j_2}^{a_2}\dots u_{j_m}^{a_m}}{u_{j_1}-u_{j_k}}.
        \end{align*}
        By interchanging the summation indices, the first sum can be written as
        \begin{align*}
            &\sum_{{\rm J}^{[m]\backslash \{1\}}}u_{j_2}^{a_2}\dots u_{j_m}^{a_m}\sum_{\substack{j_1>j_{m+1}\\ j_1,j_{m+1}\neq j_2,\dots,j_m}}\frac{u_{j_1}^{a_1-1}(1-u_{j_1})-u_{j_{m+1}}^{a_1-1}(1-u_{j_{m+1}})}{u_{j_1}-u_{j_{m+1}}}\\
            &=\sum_{{\rm J}^{[m]\backslash \{1\}}}u_{j_2}^{a_2}\dots u_{j_m}^{a_m}\sum_{\substack{j_1\neq j_{m+1}\\ j_1,j_{m+1}\neq j_2,\dots,j_{m}}}\sum_{j=1}^{2}(-1)^j \sum_{k=0}^{\lfloor \frac{a_1-j}{2}\rfloor}u_{j_1}^ku_{j_{m+1}}^{a_1-j-k}-\frac{1}{2}e(a_1-j)u_{j_1}^{\frac{a_1-j}{2}}u_{j_{m+1}}^{\frac{a_1-j}{2}}\\
            &=\sum_{j=1}^2\sum_{k=0}^{\lfloor \frac{a_1-j}{2}\rfloor}\sum_{{\rm J}^{[m]}}(-1)^ju_{j_2}^{a_2}\dots u_{j_m}^{a_m}\bigg(u_{j_1}^ku_{j_{m+1}}^{a_1-j-k}-\frac{1}{2}e(a_1-j)u_{j_1}^{\frac{a_1-j}{2}}u_{j_{m+1}}^{\frac{a_1-j}{2}}\bigg).
        \end{align*}
        Similarly, the second sum equals
            \begin{multline*}
            \sum_{k=2}^{m}{\rm sgn}(a_1-a_k)\sum_{j=1}^{2} (-1)^{j} \\ \times \sum_{l={\rm min}\{a_1+1-j,a_k\}}^{\lfloor \frac{a_1+a_k-j}{2}\rfloor}\sum_{{\rm J}^{[m]}}u_{j_2}^{a_2}\dots u_{j_m}^{a_m}\bigg(u_{j_1}^lu_{j_{k}}^{a_1+a_k-j-l}-\frac{1}{2}e(a_1+a_k-j)u_{j_1}^{\frac{a_1+a_k-j}{2}}u_{j_{m+1}}^{\frac{a_1+a_k-j}{2}}\bigg).
        \end{multline*}
        Therefore, after substituting these into \eqref{p1}, equation \eqref{recur-rl-I} is proved. The equations for the initial conditions can be seen from the formulas
        $$
            \sum_{{\rm J}^{[m]}}\prod_{i=1}^{m-1} u_{j_i}^{a_{i+1}}=(\beta-m+1)\sum_{{\rm J}^{[m-1]}}\prod_{i=1}^{m-1} u_{j_i}^{a_{i+1}},\quad
            \bigg(\sum_{j=1}^{\beta}u_j\bigg)^m=\sum_{\tau\in\mathcal{P}(m)}\frac{m!}{\tau!s_\tau!}\sum_{{\rm J}^{[|\tau|]}}\prod_{k=1}^{|\tau|}u_{j_k}^{\tau_k}.
        $$
    \end{proof}

    \begin{remark}
        From the expansion of $\prod_{j=1}^\beta\left(1-\left(1-e^{ i\theta / N}\right) u_j\right)^{N-2}$, one knows that the multiple integral in \eqref{4.12b} can be written as a linear combination of $\mathcal{I}^{(m)}(a_1,a_2,\dots,a_m) $, which upon use of Proposition \ref{propA1} can be expressed in terms of $\theta$-derivatives of the leading term $\mathcal{I}[1]$. Therefore, all the higher-order correction terms of the two-point correlation can be expressed as a finite linear combination of the derivatives of the leading term with coefficients being polynomials in $\theta$. However, to make this explicit beyond the $N^{-2}$ term, in particular for the $N^{-4}$ term, would be an arduous task due to the many terms involved and
        multiple cancellations.
        In the case $\beta =2$, it was noted in (\ref{R1z}) that a simple differential relation linking
        $\rho_{(2),2,\beta=2}^{\rm bulk}(s,0) $ and $\rho_{(2),0,\beta=2}^{\rm bulk}(s,0)$ can be read off from the respective exact functional forms.
    \end{remark}

         \appendix
\section*{Appendix B}
\renewcommand{\thesection}{B} 
\setcounter{equation}{0}
\setcounter{prop}{0}

It turns out that the leading  small-$x$ 
(small-$s$ after introducing scaled variables)
form of the coefficient of $\xi^k$ in (\ref{4.2c}) can be determined, along with the $N$-dependent factor multiplying this term. According to  (\ref{4.2d}) and (\ref{1.15}), for $\beta = 2$, 
and with bulk scaling as on the LHS of (\ref{4.3b}) (scaled variable $s$)
we read off that for $k=0$ and $k=1$ respectively, the  leading  small-$s$ coefficients of $\xi^k$ are
\begin{equation}\label{c1}
{(1-1/N^2) \pi^2 s^2 \over 3}, \quad
- {(1-4/N^2)(1-1/N^2)^2 \pi^6 s^7 \over 4050}.
\end{equation}
The significance of knowledge of this is that it provides necessary conditions for the differential identity (\ref{5.1b}), as each such coefficient of $\xi^k$ must independently have their  $N$ independent part related to the $N^{-2}$ dependent part according to (\ref{5.1b}), and similarly should there be such relations for parts at order $N^{-4}$ etc.

\begin{prop}
Consider (\ref{4.2c}) for the circular $\beta$ ensemble with bulk scaled variables as on the LHS of (\ref{4.3b}).
The  leading  small-$s$ coefficient of $\xi^k$ is, for $n:=k+2 < N/2$ and $\kappa := \beta/2$ a positive integer, equal to
\begin{multline}\label{c1B}
{(-s)^k \over k!} \Big ( {2 \pi s\over N} \Big )^{ \kappa (k+2)(k+1)} 
 S_k(\beta,\beta,\kappa) \\
 \times
{(\kappa!)^n \over \Gamma(n \kappa +1)} 
\bigg ( \prod_{j=1}^{n-1} {\Gamma(\kappa j + 1) \over \Gamma(\kappa(n+j) + 1)} \bigg )
\bigg (  \prod_{j=1}^{n-1}  \prod_{l=1}^{j \kappa} (\kappa^2 N^2 - l^2) \bigg )
\bigg |_{n=k+2}.
\end{multline}

\end{prop}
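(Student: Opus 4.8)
The plan is to work directly from the second (Taylor) form of the generating function (\ref{4.2c}), read off the coefficient of $\xi^k$, and extract its leading small-$s$ behaviour from the generalised hypergeometric evaluation (\ref{4.5}), reusing the prefactor simplification already performed in the proof of Proposition \ref{P4.1}. Since the integrands $\rho_{(k+2)}$ in the second expression of (\ref{4.2c}) carry no $\xi$-dependence, that expression is literally the power series in $\xi$; after the bulk scaling underlying (\ref{4.3b}) (set $x=2\pi s/N$ and change variables $x_i=2\pi y_i/N$) the coefficient of $\xi^k$ becomes, for $k\ge 1$,
$$
\frac{(-1)^k}{k!}\int_0^s dy_1\cdots\int_0^s dy_k\,\tilde{\rho}_{(k+2),\beta}(0,s,y_1,\dots,y_k),
$$
with $\tilde{\rho}_{(n),\beta}$ as in (\ref{4.5}) (the case $k=0$ giving simply the leading small-$s$ term of $\tilde{\rho}_{(2),\beta}(0,s)$).

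The key step is the small-$s$ asymptotics of $\tilde{\rho}_{(k+2),\beta}$. Put $n=k+2$ and let all arguments tend to $0$ in (\ref{4.5}): the generalised hypergeometric factor ${}_2F_1^{(\beta/2)}$ and every exponential phase tend to $1$, while $\prod_{j<k}|e^{2\pi i r_k/N}-e^{2\pi i r_j/N}|^\beta=\prod_{j<k}\big(2\sin(\pi(r_k-r_j)/N)\big)^\beta\sim(2\pi/N)^{\beta\binom n2}\prod_{j<k}|r_k-r_j|^\beta$. The omitted corrections --- the higher Taylor terms of $\sin$ (relative order $O(s^2)$), the phases ($1+O(s)$), and ${}_2F_1-1=O(s)$ --- are analytic and only raise the homogeneity degree in $s$, so the leading small-$s$ term of $\tilde{\rho}_{(n),\beta}$ is the $N$-dependent prefactor times $(2\pi/N)^{\kappa n(n-1)}\prod_{j<k}|r_k-r_j|^\beta$. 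Substituting this into the $k$-fold integral and rescaling $y_i=su_i$ with $u_i\in(0,1)$ collects the Vandermonde differences among $0,s,y_1,\dots,y_k$ into
$$
s^{k+\kappa(k+2)(k+1)}\int_{[0,1]^k}\prod_{i=1}^{k}u_i^{\beta}(1-u_i)^{\beta}\prod_{i<j}|u_i-u_j|^{\beta}\,du=s^{k+\kappa(k+2)(k+1)}\,S_k(\beta,\beta,\kappa),
$$
the weights $u_i^\beta$ and $(1-u_i)^\beta$ coming from the pairs $(0,y_i)$ and $(s,y_i)$ respectively. The factor $(2\pi/N)^{\kappa(k+2)(k+1)}$ then combines with $s^{\kappa(k+2)(k+1)}$ to give $(2\pi s/N)^{\kappa(k+2)(k+1)}$, as in (\ref{c1B}).

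It then remains to simplify the $N$-dependent prefactor, namely the prefactor of (\ref{4.5}) multiplied by $M_{N-n}(n\beta/2,n\beta/2,\beta/2)$. This is exactly the quantity treated in the proof of Proposition \ref{P4.1}: substituting the Morris integral and cancelling gamma functions (the hypothesis $n<N/2$ supplying the needed $N>2n$) reduces it to
$$
\frac{(\kappa!)^n}{\Gamma(n\kappa+1)}\bigg(\prod_{j=1}^{n-1}\frac{\Gamma(\kappa j+1)}{\Gamma(\kappa(n+j)+1)}\bigg)\bigg(\prod_{j=1}^{n-1}\prod_{l=1}^{j\kappa}(\kappa^2N^2-l^2)\bigg).
$$
Assembling the three ingredients --- the combinatorial factor $(-s)^k/k!$, the Selberg integral $S_k(\beta,\beta,\kappa)$ with its accompanying $(2\pi s/N)^{\kappa(k+2)(k+1)}$, and this prefactor --- reproduces (\ref{c1B}).

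The main obstacle is the justification in the second paragraph that the leading small-$s$ contribution is precisely the homogeneous Vandermonde term of degree $\beta\binom n2$, and that all discarded corrections --- in particular the $N$-dependence buried in the $\sin$ factors and in ${}_2F_1^{(\beta/2)}$ --- enter only at strictly higher order in $s$ (and hence do not affect the stated leading coefficient); once this uniformity is in hand, the remaining work is the single Selberg evaluation and the gamma-function bookkeeping inherited from Proposition \ref{P4.1}. As a consistency check, specialising to $\beta=2$ (so $\kappa=1$) at $k=0$ and $k=1$ recovers the two expressions displayed in (\ref{c1}).
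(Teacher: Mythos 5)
Your proposal is correct and follows essentially the same route as the paper's own proof: replace the hypergeometric and phase factors in (\ref{4.5}) by unity for small arguments, rescale the integration variables to pull out the Selberg integral $S_k(\beta,\beta,\kappa)$ together with the factor $(2\pi s/N)^{\kappa(k+2)(k+1)}$, and simplify the remaining $N$-dependent prefactor via the gamma-function cancellations from the proof of Proposition \ref{P4.1}. Your write-up is in fact somewhat more explicit than the paper's (which leaves the higher-order-in-$s$ error terms and the origin of the weights $u_i^\beta(1-u_i)^\beta$ implicit), so no changes are needed.
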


\begin{proof}
Analogous to the structure of (\ref{4.5}) in the limit of small $\{r_j\}$, when to leading order all the $r_j$ dependent terms on the second line can be set equal to unity, in this regime we have for the general $\beta > 0$ bulk scaled $n$-point correlation
\begin{multline}\label{4.5d}
\tilde{\rho}_{(n),\beta}\left(r_1, \ldots, r_n\right)\sim \\ \frac{(N-n+1)_n((\beta/2)!)^N}{N^n\Gamma\left(\frac{\beta N}{2}+1\right)}M_{N-n}(n \beta / 2, n \beta / 2, \beta / 2)\prod_{1\leq j<k\leq n}
\Big |{2\pi r_k \over N}-{2\pi r_j \over N} \Big |^\beta.       
\end{multline}
The relevance of this to (\ref{4.2c}) with bulk scaling as on the LHS of (\ref{4.3b}) is seen by changing variables in the multiple integrals therein $x_j \mapsto 2 \pi s x_j/N$.
The integrand can then be substituted by (\ref{4.5d}),
showing that to leading order for $s$-small, the coefficient of $\xi^k$ is given by
\begin{multline}\label{c1A}
{(-s)^k \over k!} \Big ( {2 \pi s\over N} \Big )^{ \beta (k+2)(k+1)/2} 
 S_k(\beta,\beta,\beta/2) \\
 \times
\frac{(N-n+1)_n((\beta/2)!)^N}{N^n\Gamma\left(\frac{\beta N}{2}+1\right)}
      M_{N-n}(n \beta / 2, n \beta / 2, \beta / 2) \Big |_{n=k+2}.
\end{multline}
Here $S_k$ refers to the Selberg integral.
According to the working of the proof of Proposition \ref{P3.1}, the second line simplifies to the second line of (\ref{c1A}), and thus the result.
\end{proof}

The $N$-dependent factors in (\ref{c1A}), namely $N^{-\kappa n (n-1)} $ times the double product on the second line, with $n=k+2$, have the large $N$ form
\begin{equation}\label{R1w}
\kappa^{n(n-1)} \bigg ( 1 - {1 \over N^2}v_2(\kappa) + {1 \over 2 N^4} \Big (
(v_2(\kappa))^2 - v_4(\kappa) \Big ) + 
{\rm O}\Big ( {1 \over N^6} \Big ) \bigg ),
\end{equation}
where $v_s(\kappa) := \kappa^{-2}\sum_{j=1}^{n-1} \sum_{l=1}^{j \kappa} l^s$. In particular
$$
v_2(\kappa) = { n \over 12 \kappa }(n-1) (\kappa(n-1) + 1) (\kappa n + 1).
$$
Since ${1 \over 6 \beta} {d^2 \over d s^2} s^{n + \kappa n (n - 1)} = v_2(\kappa)
s^{(n-2) + \kappa n ( n - 1)}$, where the term being differentiated on the LHS is the $s$-dependent factor in (\ref{c1B}) times $s^2$, we see that there is consistency with the conjecture (\ref{5.1Ba}).
The coefficient of $N^{-4}$ in (\ref{R1w}) is an order eight polynomial in $n$, telling us that if there were to be an analogue of (\ref{5.1Ba}) at this order, the highest power derivative must be four.

\small
\providecommand{\bysame}{\leavevmode\hbox to3em{\hrulefill}\thinspace}
\providecommand{\MR}{\relax\ifhmode\unskip\space\fi MR }
\providecommand{\MRhref}[2]{%
  \href{http://www.ams.org/mathscinet-getitem?mr=#1}{#2}
}
\providecommand{\href}[2]{#2}

  \end{document}